\newtheorem{lemma}{Lemma}
\newtheorem{theorem}{Theorem}
\newtheorem{remark}{Remark}
\newtheorem{definition}{Definition}
\newcommand{\ecc}{\textsf{\textsc{UCC }}}
\newcommand{\vcc}{\textsf{\textsc{BCC }}}
\newcommand{\ec}{\textsf{\textsc{UCAST-CONGEST }}}
\newcommand{\vc}{\textsf{\textsc{BCAST-CONGEST }}}
\newcommand{\cc}{\textsf{\textsc{CONGEST-CLIQUE }}}
\newcommand{\congest}{\textsf{\textsc{CONGEST }}}
\newcommand{\local}{\textsf{\textsc{LOCAL }}}
\newenvironment{enumerate*}%
  {\vspace{-2ex} \begin{enumerate} %
     \setlength{\itemsep}{-1ex} \setlength{\parsep}{0pt}}%
  {\end{enumerate}}
\newenvironment{itemize*}%
  {\vspace{-2ex} \begin{itemize} %
     \setlength{\itemsep}{-1ex} \setlength{\parsep}{0pt}}%
  {\end{itemize}}
\newenvironment{description*}%
  {\vspace{-2ex} \begin{description} %
     \setlength{\itemsep}{-1ex} \setlength{\parsep}{0pt}}%
  {\end{description}}
\newcommand{\BO}{\mathcal{O}}
\newcommand{\disj}[1]{\mathop{\mathrm{DISJ_{#1}}}}
\newtheorem*{lem:d2}{\textbf{Lemma \ref{diam2}}}
\newtheorem*{lem:2a}{\textbf{Lemma \ref{2approx2}}}
\newtheorem*{lem:a}{\textbf{Lemma \ref{approxguarantee}}}
\newtheorem*{lem:gvc}{\textbf{Lemma \ref{greedyvc}}}
\begin{document}

\author{
Stephan Holzer
\\ MIT 
\\ Cambridge, MA, USA
\\ \texttt{holzer@mit.edu}
\and 
Nathan Pinsker
\\ MIT 
\\ Cambridge, MA, USA
\\ \texttt{npinsker@mit.edu}
}
\date{}

\title{Approximation of Distances and Shortest Paths in the Broadcast Congest Clique\footnote{Work supported by the following grants: AFOSR Contract Number FA9550-13-1-0042, NSF Award 0939370-CCF, NSF Award CCF-1217506, NSF Award number CCF-AF-0937274.}}

\maketitle
\thispagestyle{empty}

\begin{abstract}
We study the broadcast version of the \cc model of distributed computing~\cite{kuhn2014,lotker2003mst,nanongkai2014distributed}. In this model, in each round, any node in a network of size $n$ can send the same message (i.e. broadcast a message) of limited size to every other node in the network. Nanongkai presented in [STOC'14~\cite{nanongkai2014distributed}] a randomized $(2+o(1))$-approximation algorithm to compute all pairs shortest paths (APSP) in time\footnote{We use the convention that $\tilde{\Omega}(f(n))$ is essentially $\Omega(f(n)/$polylog$f(n))$ and $\tilde{\BO}(f(n))$ is essentially $\BO(f(n) $polylog$f(n))$.} $\tilde\BO(\sqrt{n})$ on weighted graphs. We complement this result by proving that any randomized $(2-o(1))$-approximation of APSP and $(2-o(1))$-approximation of the diameter of a graph takes $\tilde\Omega(n)$ time in the worst case. This demonstrates that getting a negligible improvement in the approximation factor requires significantly more time.
Furthermore this bound implies that already computing a $(2-o(1))$-approximation of all pairs shortest paths is among the hardest graph-problems in the broadcast-version of the \cc model. This is true as any graph-problem can be solved trivially in linear time in this model and contrasts a recent $(1+o(1))$-approximation for APSP that run in time $\BO(n^{0.15715})$ and an exact algorithm for APSP that runs in time $\tilde\BO(n^{1/3})$) in the unicast version of the \cc model \cite{computing2014censor, kaski2014algebraisation}. 

This lower bound in the \cc model is derived by first establishing a new lower bound for $(2-o(1))$-approximating the diameter in weighted graphs in the \congest model, which is of independent interest. This lower bound is then transferred to the \cc model.

On the positive side we provide a deterministic version of Nanongkai's $(2+o(1))$-approximation algorithm for APSP~\cite{nanongkai2014distributed}. To do so we present a fast deterministic construction of small hitting sets. We also show how to replace another randomized part within Nanongkai's algorithm with a deterministic source-detection algorithm designed for the \congest model in \cite{lenzen2013efficient}. 
\end{abstract}

\section{Introduction}
In a distributed message passing model a network is classically represented as a graph. In this graph any node can send (pass) one message to its neighbors in every round. There are two major research directions concerning message passing models. 

The first research direction deals with determining the locality and congestion of problems. Using the \local model~\cite{peleg}, where message-size is unbounded, one tries to characterize the locality of problems, which is the ability of a node to make decisions regarding a problem purely based on information on its local neighborhood in a graph. Using the \congest model~\cite{peleg}, where message-size is bounded, one tries to characterize the delays caused by congestion. Congestion arises due to bottlenecks in the network that do no provide enough bandwidth during the computation. Both, the \local model and the \congest model are classic models that have received a great deal of attention in the past decades. Recently it was pointed out in~\cite{boaz2011sorting} that the \congest model does not avoid interference from locality issues, while the \local model avoids interference from congestion. To be more precise, congestion is completely avoided in the \local model due to unlimited bandwidth. On the other hand the complexity of algorithms in the \congest model may still depend on the local structure of a graph (e.g. lower bounds transfer from the \local model). To truly separate the study of congestion from locality, one needs to consider networks that avoid locality issues. These are e.g. networks in which each node is directly connected to any other node in the network (represented by a clique), which is a network in which any graph problem can be solved within one round in case unlimited bandwidth is available. Such a model was introduced earlier by Lotker et al.~\cite{lotker2003mst} with the intention to study overlay networks that have this property and was coined the \cc model. Examples of parallel systems design that recently provided additional motivation to this original motivation to study the \cc as an overlay network~\cite{lotker2003mst} are included in Section~\ref{sec:rel}.

The second research direction focuses on determining the power of broadcast compared to (multi-)unicast. Broadcast denotes the setting in which a node can only send the same message to all its neighbors at the same time, while in a (multi-)unicast setting each node can send different messages to different neighbors at the same time. 

Results of this paper push both research directions. To be more precise, we present a linear lower bound and new improved bounds for a broadcast model (the \vcc model, see definition below) that purely studies congestion.

\begin{definition}
When applied to the \cc model, we denote by 
\ecc model 
the (multiple-)unicast version of the \cc model, and by 
\vcc model 
the broadcast version of the \cc model~\cite{kuhn2014,lotker2003mst}.
\end{definition}

\subsection{Contribution}
In this context, this paper extends the work of~\cite{kuhn2014,holzer2013phd,nanongkai2014distributed}. Drucker, Kuhn and Oshman~\cite{kuhn2014} started studying the difference in computational power between the \ecc and the \vcc models. Like~\cite{kuhn2014} we present a linear lower bound in the \vcc model. The lower bounds of~\cite{kuhn2014} were the first deterministic (and conditional randomized) linear lower bounds in this model and consider subgraph detection. Ours are the first unconditional randomized linear lower bounds, while we consider $(2-o(1))$-approximations of APSP and diameter. This result demonstrates:
\begin{itemize}
\item There is a huge (at least quadratic) difference in the complexity between computing a $(2+o(1))$-approximation \cite{nanongkai2014distributed} and a $(2-o(1))$-approximation of APSP in this model. 
\item Computing a $(2-o(1))$-approximation of APSP is among the hardest graph-problems in the \vcc model: any graph-problem (with $\BO(\log n)$-encodable weights) can be solved in linear time, as each node is incident to at most $n-1$ edges. An algorithm could let each node $v$ broadcast the IDs of all of $v$'s neighbors and weights of incident edges in time $\BO(n)$. Then each node in the network has full information on the graph and can perform any computation (including e.g. NP-complete problems) internally, which does not contribute to the runtime.
\item One might not be interested in improving the approximation factor below $2-o(1)$, as this takes almost as much time as computing the exact solution.
\item There is a clear separation between the \ecc and \vcc model with respect to APSP computation. Our lower bounds contrast the results of \cite{computing2014censor, kaski2014algebraisation}, who showed that e.g. even exact APSP can be solved in the \ecc model within $\tilde\BO(n^{1/3})$ time and $(1+o(1))$-approximated in time $\BO(n^{0.15715})$. 
\item Any $\alpha$-approximation of the diameter cannot be computed faster than an $\alpha$-approximation to APSP (for any $\alpha, 1\leq \alpha \leq 2-o(1)$) in the \vcc model, while in the \ecc model there currently exists a faster algorithm for exact diameter computation than for exact APSP \cite{computing2014censor, kaski2014algebraisation}. 
\end{itemize}
Note that this lower bound strengthens the $\tilde\Omega(\sqrt{n})$ lower bound for exact computation of APSP in the \vcc model by \cite{computing2014censor} in terms of runtime and extends it to approximations. The authors of \cite{computing2014censor} provided this lower bound independently and simultaneously via matrix multiplication lower bounds.

To obtain our lower bounds, we first use techniques of~\cite{frischknecht2012networks} to derive an $\tilde\Omega(n)$-round lower bound to $(2-o(1))$-approximate the diameter of weighted graphs in the \congest model. This implies an $\tilde\Omega(n)$-round lower bound to $(2-o(1))$-approximate APSP. To prove our lower bounds, we modify a construction for unweighted graphs that was claimed in~\cite{holzer2012optimal} and can also be found in~\cite{wattenhofer2012lecture} to the weighted setting. As in~\cite{frischknecht2012networks}, this construction is used to transfer lower bounds for set disjointness from two-party communication complexity~\cite{kushilevitz97}. Next we transfer this lower bound from the \congest model to the \vcc model. Compared to this,~\cite{kuhn2014} uses lower bounds for the number-on-forehead model (NOF) of communication complexity~\cite{kushilevitz97} in combination with their constructions.

Apart from these lower bounds, we derive positive results on computing APSP by extending the line of work of~\cite{holzer2013phd,nanongkai2014distributed}. We start by replacing the randomized parts of a recent result by Nanongkai \cite{nanongkai2014distributed}, who presents an algorithm in~\cite{nanongkai2014distributed} for a $(2 + o(1))$-approximation of the all-pairs shortest paths problem in the \vcc model in $\tilde\BO(\sqrt{n})$ rounds, with deterministic ones. We show that the resulting algorithm runs in the \vcc model.

\begin{table}
  \begin{minipage}{\textwidth}
  \centering
  \begin{tabular}{ c | c | c | c }
     approx. factor & APSP & Diameter & SSSP \\ \hline
     $1$ & $\BO(n)^{\#}$ & $\BO(n)^{\#}$ & $\tilde\BO(\sqrt{n})^*$ \\ \hline
     $2 - o(1)$ & $\tilde\Omega(n)^{\dag}$  & $\tilde\Omega(n)^{\dag}$ & --- \\ \hline
     $2$ & --- & $\tilde\BO(\sqrt{n})^*$  & --- \\ \hline
     $2 + o(1)$ & $\tilde\BO(\sqrt{n})^{\ddag}$  &  $\tilde\BO(\sqrt{n})^{\ddag}$ & --- \\
  \end{tabular}
\begin{itemize}[noitemsep]
\item[$\#$)] Trivial bound: collect the whole topology in a single node, perform computation internally.
\item[$*$)] Nanongkai's SSSP algorithm~\cite{nanongkai2014distributed}. See Remark~\ref{rem:2diam} for the diameter approximation.
\item[$\dag$)] Our randomized lower bound, see Theorem~\ref{thm:2-approxLB-vcc}.
\item[$\ddag$)] Our deterministic version of the randomized algorithm of \cite{nanongkai2014distributed}, see Theorem~\ref{thm:APSP-vcc}.
\end{itemize}
  \end{minipage}
\captionsetup{singlelinecheck=off}
\caption{Summary of new and previous results for problems we study on positively weighted graphs in the \vcc model. Recent results of \cite{computing2014censor,kaski2014algebraisation} in the \ecc model are summarized in Section \ref{sec:rel}.}
\end{table}

\subsection{Structure of the Paper}
We review related work in Section \ref{sec:rel} and define the computation models and terminology that we work with in Section \ref{sec:model}. Our lower bounds are presented in Section \ref{sec:lb}, where we start with a review of two-party communication complexity, state the lower bounds for the \congest model and transfer them to the \vcc model. A key-ingredient for our upper bounds is a deterministic hitting set construction, which we present in Section \ref{sec:vccHS}. Finally, in Section \ref{sec:vccAPSP}, we present our deterministic version of Nanongkai's all-pairs shortest paths approximation algorithm in the \vcc model. We conclude by briefly mentioning some open problems and directions for future work in Section \ref{sec:open}.

\section{Related Work}\label{sec:rel}
\textbf{Algorithms in the \vcc and \ecc models:} The first to study the \cc model were Lotker et al.~\cite{lotker2003mst}, where they presented an $\BO(\log \log n)$-round algorithm for constructing a minimum spanning tree in the \ecc model. This was improved by Pemmaraju and Sardeshmukh to $\BO(\log \log n)$ in \cite{pemmaraju2014minimum}. Lenzen obtained in~\cite{lenzen2012optimal} an $\BO(1)$-round algorithm in the \ecc model for simultaneously routing $n$ messages per vertex to their assigned destination nodes, as well as an $\BO(1)$ algorithm for sorting $\BO(n^2)$ numbers, given that each vertex begins the algorithm knowing $\BO(n)$ numbers. Independently Patt-Shamir and Teplitsky~\cite{boaz2011sorting} showed a similar, but slightly weaker result on sorting in the \ecc model. Later Hegeman et al.~\cite{DBLP:journals/corr/HegemanPS14} provided constant and near-constant (expected) time algorithms for problems such as computing a 3-ruling set, a constant-approximation to metric facility location, and (under some assumptions) a constant-factor approximations to the minimum spanning tree in the \ecc model. Holzer~\cite{holzer2012optimal} provided an deterministic $\BO(\sqrt{n})$-algorithm for exact unweighted SSSP (equivalent to computing a breadth first search tree) in the \vcc model. Independently Nanongkai~\cite{nanongkai2014distributed} provided randomized (w.h.p.) algorithms in the \vcc model that take $\tilde\BO(n^{1/2})$ rounds to compute (exact) SSSP, and $\tilde\BO(n^{1/2})$ rounds to $(2 + o(1))$-approximate APSP on weighted graphs. Much of our work for deterministic APSP builds off~\cite{nanongkai2014distributed}, primarily on his idea of "shortcut edges", which do not change the weighted shortest path length between any two nodes but decrease the diameter of the graph. This is combined with a deterministic $h$-hop multi-source shortest paths scheduling technique implied by the source-detection algorithm of Lenzen and Peleg~\cite{lenzen2013efficient}, which works in the broadcast version of the \congest model. Note that other versions that could have been used, such as the one presented in~\cite{dissler-thesis,holzer2012optimal}, only work in the (multi-)unicast version. Recently Censor-Hillel and Paz \cite{computing2014censor}, as well as Kaski, Korhonen et al. \cite{kaski2014algebraisation} transferred fast matrix multiplication algorithms into the \ecc model using results from \cite{lenzen2012optimal} and derived a runtime of $\BO(n^{1/3})$ in semirings and $\BO(n^{0.15715})$ in rings. Using this they obtain an $\BO(n^{0.15715})$ algorithm for triangle detection and undirected unweighted APSP. Both papers also solve APSP on directed weighted graphs in time $\tilde\BO(n^{1/3})$. In addition \cite{kaski2014algebraisation} presents an $(1+o(1))$-approximation for exact directed weighted APSP in time $\BO(n^{0.15715})$, while \cite{computing2014censor} derives results for fast diameter and girth computation as well as for $4$-cycle detection.

\textbf{Lower bounds in the \vcc and \ecc models:} Drucker et al.~\cite{kuhn2014} were the first to provide lower bounds in the \vcc model. They derived these bounds by transferring lower bounds for set disjointness in the 3-party NOF model to the congested clique. In addition~\cite{kuhn2014} showed that explicit lower bounds in the \ecc model imply circuit lower bounds for threshold circuits (TC). While explicit lower bounds in the \ecc model remain open and might have a major impact to other fields of (Theoretical) Computer Science as mentioned above, they argue that most problems have a linear lower bound in the \ecc model by using a counting-argument. Independent and simultaneously to us, the authors of \cite{computing2014censor} presented an $\tilde\Omega(\sqrt{n})$ lower bound for APSP in the \vcc model, which they derive from matrix multiplication lower bounds that they state. 

\textbf{Lower bounds in the \congest model:} Frischknecht et al.~\cite{frischknecht2012networks} (which is based on~\cite{sarma2012distributed}) showed an $\tilde\Omega(n)$ lower bound for exact computation of the diameter of an unweighted graph. In this paper we draw on the ideas of~\cite{frischknecht2012networks} to obtain lower bounds for (2-o(1))-approximation the diameter in weighted networks. Note that also Nanongkai~\cite{nanongkai2014distributed} presents an $\tilde\Omega(n)$-time lower bound for any $poly(n)$-approximation algorithm for APSP on weighted graphs in the \congest model and shows that any $\alpha(n)$-approximation of APSP on unweighted graphs requires $\tilde\Omega(n/\alpha(n))$ time. However, his proof relies on an information-theoretic argument and uses a star-shaped graph such that it cannot be extended to the \vcc model, as in this model every node could simply broadcast its distance from the center to all other nodes. 

\textbf{Connections to systems and other models:} Finally we want to provide examples of parallel systems that might benefit from theoretical results in the \cc model. These include systems that provide all-to-all communication between $10,000$ nodes at full bandwidth~\cite{nightingale2012flat}. In addition~\cite{DBLP:conf/sirocco/HegemanP14} showed a close connection between the \ecc model and popular parallel systems such as MapReduce~\cite{dean2008mapreduce} and analyzed which kind of algorithms for the \ecc model can be simulated directly in MapReduce. Furthermore Klauk et al.~\cite{klauck2013distributed} established a connection to large-scale graph processing systems such as Pregel~\cite{malewicz2010pregel}. Finally, the authors of~\cite{kuhn2014} pointed out that the \vcc model is used in streaming~\cite{kuhn1}, cryptology~\cite{kuhn14} and mechanism design~\cite{kuhn7}. They also establish connections between the \ecc and ACC as well as TC0 circuits.

\section{Model and Definitions} \label{sec:model}

\subsection{The \congest Model}
Our network is represented by an undirected graph $G = (V, E)$, where nodes $V$ model processors or computers and edges $E$ model links between the processors. Edges can have associated \textit{weights} $w : E \rightarrow \{a/p\;|\;a\in \{1,\dots,p^2\} \subset \mathbb{N}\}$ for some $p\in poly(n)$. This ensures that each weight is a positive multiple of $1/p$ and can be encoded in $\BO(\log n)$ bits. Two nodes can communicate directly with each other if and only if they are connected by some edge from set $E$. We also assume that the nodes have unique IDs in the range of $\{1,\dots,poly(n)\}$ and infinite computational power.\footnote{This assumption is made by the model because it is used to study communication complexity. Note that we do not make use of this, as our algorithms perform efficient computations.} At the beginning, each node knows only the IDs of its neighbors and the weights of its incident edges. 

We consider a model where nodes can send messages to their neighbors over synchronous rounds of communication. During a round, each node $u$ can send a message of $B$ bits through each edge connecting $u$ to some other vertex $v$. We assume $B=\BO(\log n)$ during our algorithms, which is the standard choice~\cite{peleg} and state our lower bounds depending on arbitrary $B$. The message will arrive at node $v$ at the end of the round. We analyze the performance of an algorithm in this model by measuring the worst-case number of communication rounds required for the algorithm to complete.

\begin{definition}[Distributed Round Complexity]
Let $\mathcal{A}$ be the set of distributed deterministic algorithms that evaluate a function $g$ on the underlying graph $G$ over $n$ nodes (representing the network). Denote by $R^{dc}\left(A\left(G\right)\right)$ the distributed round complexity (indicated by dc) representing the number of rounds that an algorithm $A\in \mathcal{A}$ needs in order to compute $g\left(G\right)$. We define
	$R^{dc}\left(g\right) = \min_{A \in \mathcal{A}}\max_{G \in \mathbb{G}_n} R^{dc}\left(A\left(G\right)\right)$
to be the smallest amount of rounds/time slots any algorithm needs in order to compute $g$ on a graph $G\in \mathbb{G}_n$. Here, $\mathbb{G}_n$ is the set of all (connected) graphs over $n$ nodes. We denote by $R^{dc-pub}_\varepsilon\left(g\right)$ the (public coin\footnote{This is mainly of interest for our lower bounds. Our algorithms also work with private randomness.}) randomized round complexity of $g$ when the algorithms have access to public coin randomness and compute the desired output with an error probability smaller than $\varepsilon$.
\end{definition}

\subsection{The \cc Model}

In this model every vertex in a network $G$ can directly communicate with every other vertex in $G$. Note that although the communication graph is a clique, we are interested in solving a problem on a subgraph $G$ of the clique. Working under the \vc model and the \ec models (broadcast and (multi-)unicast versions of the \congest model), while making this assumption gives us the \vcc model and the \ecc model, respectively.

\subsection{Problems and Definitions}

For any nodes $u$ and $v\in V$, a \textit{(u,v)-path} $P$ is a path $(u = x_0, x_1, \dots, x_l = v)$ where $(x_i, x_{i+1}) \in E$ for all $i$. We define the weight of a path $P$ to be $w(P):=\sum_{i=0}^{l-1}w(x_i, x_{i+1})$. Let $P_G(u, v)$ denote the set of all (u,v)-paths in $G$. We define $d_{w}(u, v) = \min_{P \in P_G(u,v)}w(P)$; in other words, $d_{w}(u, v)$ is the weight of the shortest (weighted) path from $u$ to $v$ in $G$. The (weighted) \textit{diameter} $D_w(G)$ of $(G, w)$ is defined as $\max_{u, v\in V}\ d_{G, w}(u, v)$. For unweighted graphs $G$ (i.e. $w(e)=1$ for all $e\in E$), we omit $w$ from our notations. In particular, $d(u, v)$ is the (hop-)distance between $u$ and $v$ in $G$, and $D$ is the diameter of the unweighted network $G$.

\begin{definition}[Single Source Shortest Paths and All-Pairs Shortest Paths]
In the (weighted) single source shortest paths problem (SSSP), we are given a weighted network $(G, w)$ and a source node $s$. We want each node $v$ to know the distance $d_{w}(s, v)$ between itself and $s$. In the (weighted) all pairs shortest paths problem (APSP), each node $v\in V$ needs to know $d_{w}(u, v)$ for all $u \in V$.
\end{definition}

For any $\alpha$, we say an algorithm $A$ is an $\alpha$-\textit{approximation} algorithm for SSSP if each node $v$ obtains a value $\widetilde{d}_w(s, v)$ from $A$, such that $d_w(s, v) \leq \widetilde{d}_w(s, v) \leq \alpha \cdot d_w(s, v)$. Similarly, we say $A$ is an $\alpha$-approximation algorithm for APSP if each node $v$ obtains values $\widetilde{d}(u, v)$ such that $d_w(u, v) \leq \widetilde{d}_w(u, v) \leq \alpha d_w(u, v)$ for all $u$.

\section{Lower Bounds for Weighted and Unweighted Diameter Computation and Approximation} \label{sec:lb}
Frischknecht et al. proved in~\cite{frischknecht2012networks} that any algorithm that computes the exact diameter of an unweighted graph requires at least $\Omega(\frac{n}{B})$ rounds of communication. Note that they consider arbitrary message-size $B$, while the \congest model typically considers $B=\BO(\log n)$. We consider arbitrary $B$ as well. Their lower bound is achieved by constructing a reduction from the two-party communication problem of \textit{set disjointness} to the problem of calculating the diameter of a particular unweighted graph $G$. We extend their construction that considers exact computation of the diameter of an unweighted graphs to the case of $(2 - 1 / poly(n))$-approximating the diameter in a (positively) weighted graph. This is done by assigning weights to the edges in their (unweighted) construction in a convenient way and deriving the approximation-factor. We start by reviewing basic tools from two-party communication complexity and then present the modification of the construction of~\cite{frischknecht2012networks} for the \congest model in Section~\ref{2approx}. Section~\ref{sec:2-approx-vcc} transfers this bound to the \vcc model. 

\subsection{A Review of Basic Two-Party Communication Complexity}
It is necessary to review the basics of two-party communication complexity in order to present our results in a self-contained way. In the remaining part of this subsection we restate the presentation given in~\cite{holzer2013phd} only for completeness and convenience of the reader.

Two computationally unbounded parties Alice and Bob each receive a $k$-bit string $a\in\{0,1\}^k$ and $b\in\{0,1\}^k$ respectively. Alice and Bob can communicate with each other one bit at a time and want to evaluate a function $h : \{0,1\}^k \times \{0,1\}^k \rightarrow \{0,1\}$ on their input. 
We assume that Alice and Bob have access to public randomness for their computation and we are interested in the number of bits that Alice and Bob need to exchange in order to compute $h$. 
\begin{definition}[Communication complexity]
Let $\mathcal{A}_\delta$ be the set of two-party algorithms that use public randomness (denoted by pub), which when used by Alice and Bob, compute $h$ on any input $a$ (to Alice) and $b$ (to Bob) with an error probability smaller than $\delta$. Let $A\in \mathcal{A_\delta}$ be an algorithm that computes $h$. Denote by $R^{cc-pub}_\delta(A(a,b))$ the communication complexity (denoted by cc) representing the number of $1$-bit messages exchanged by Alice and Bob while executing algorithm $A$ on $a$ and $b$. We define
	$$R^{cc-pub}_\delta(h) = \min_{A \in \mathcal{A_\delta}}\max_{a,b\in\{0,1\}^k} R^{cc-pub}(A(a,b))$$
to be the smallest amount of bits any algorithm would need to send in order to compute $h$.
\end{definition}
A well-studied problem in communication complexity is that of set disjointness, where we are given two subsets of $\{0,\dots,k-1\}$ and need to decide whether they are disjoint. Here, the strings $a$ and $b$ indicate membership of elements to each of these sets.

\begin{definition}[Disjointness problem]\label{appendix:def:disj}
The set disjointness function $\disj{k}: {\{0,1\}}^k \times {\{0,1\}}^k \rightarrow {\{0,1\}}$ is defined as follows.
	$$\disj{k}(a,b) =
	\begin{cases}
		0 &: \text{if there is an $i\in \{0,\dots,k-1\}$ such that $a(i)=b(i)=1$} \\
		1 &: \text{otherwise}
	\end{cases}$$
	where $a(i)$ and $b(i)$ are the $i$-th bit of $a$ and $b$ respectively (indicating whether an element is a member of the corresponding set.)
\end{definition}
We use the following basic theorem that was proven in Example 3.22 in \cite{kushilevitz97} and in \cite{BabaiFS86,Bar-YossefJKS04,KalyanasundaramS92,Razborov92}. 
\begin{theorem}\label{appendix:thm:disj}
For any sufficiently small $\delta>0$ we can bound $R^{cc-pub}_\delta(\disj{k})$ by $\Omega(k)$.
\end{theorem}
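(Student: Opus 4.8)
The plan is to prove the lower bound via the \emph{information complexity} method combined with a direct-sum decomposition, which I find the cleanest route to the $\Omega(k)$ bound. First I would reduce randomized hardness to distributional hardness in a quantitative form: by Yao's minimax principle, a public-coin protocol with worst-case error $\delta$ can have its coins fixed so as to yield a deterministic protocol of the same cost whose \emph{average} error under any fixed input distribution $\mu$ is at most $\delta$; hence $R^{cc-pub}_\delta(\disj{k}) \geq D^\mu_\delta(\disj{k})$ for every $\mu$, where $D^\mu_\delta$ denotes distributional complexity. I then pass to information cost, using the standard fact that the expected transcript length of any protocol is at least its internal information cost $\mathrm{IC}_\mu(\Pi) = I(a;\Pi \mid b) + I(b;\Pi \mid a)$, so it suffices to lower bound the information cost under a well-chosen $\mu$.

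Next I would set up the hard distribution and the direct sum. The key observation is the coordinate-wise decomposition $\disj{k}(a,b) = 1 \iff \bigwedge_{i} \neg\bigl(a(i)\wedge b(i)\bigr)$, so the atom of hardness is the single-bit $\mathrm{AND}$. Following the Bar-Yossef--Jayram--Kumar--Sivakumar construction, I would choose a \emph{collision-free} distribution on each coordinate: introduce an auxiliary bit $D_i$ choosing a ``speaker,'' and conditioned on $D_i$ set one of $a(i),b(i)$ uniform and the other to $0$, so that $(a(i),b(i)) \neq (1,1)$ always and every input in the support is a $1$-instance of $\disj{k}$. Taking the coordinates independent gives $\mu = \eta^{k}$ with auxiliary variables $D = (D_1,\dots,D_k)$. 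The direct-sum step then shows $\mathrm{IC}_\mu(\Pi\mid D) \geq \sum_{i} I\bigl(a(i),b(i);\Pi \mid D\bigr)$, and that each summand is bounded below by the single-coordinate conditional information cost of a protocol that correctly computes $\mathrm{AND}$; this uses the independence of coordinates under $\eta^{k}$ so that a protocol for $\disj{k}$ embeds into each coordinate while the others are filled in using private randomness.

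The main obstacle, and the technical heart of the argument, is the \emph{single-coordinate} lower bound: showing that any protocol which outputs the correct value of $\mathrm{AND}$ on all four inputs must have conditional information cost $\Omega(1)$, even though the input $(1,1)$ has probability zero under $\eta$. The point is that correctness on $(1,1)$ forces the transcript distribution there to differ from those on $(0,1)$ and $(1,0)$, and the rectangle / cut-and-paste property of protocols (the transcript probability factorizes into an Alice-term and a Bob-term on each leaf) then forces the transcript distributions on $(0,1)$ and $(1,0)$ themselves to be bounded apart in Hellinger distance. Translating this Hellinger separation into an $\Omega(1)$ bound on mutual information, and verifying the Pythagorean-type inequalities needed to push it through the conditioning on $D$, is where the real work lies. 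Summing the $k$ coordinate contributions yields $\mathrm{IC}_\mu(\Pi) = \Omega(k)$, and hence $R^{cc-pub}_\delta(\disj{k}) = \Omega(k)$. As a fallback I would keep Razborov's corruption method in reserve: fix a distribution under which disjoint and uniquely-intersecting pairs each carry constant mass, prove that any large rectangle nearly monochromatic on $1$-inputs must nonetheless capture $\Omega(1)$ relative mass of $0$-inputs, and conclude that covering the $1$-inputs by almost-monochromatic rectangles needs $2^{\Omega(k)}$ of them; there the corruption lemma is the analogous main obstacle.
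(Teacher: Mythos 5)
First, note that the paper does not prove this theorem at all: it is quoted as a known result, with citations to Kushilevitz--Nisan (Example 3.22) and to the original sources (Babai--Frankl--Simon, Kalyanasundaram--Schnitger, Razborov, and Bar-Yossef et al.). Your proposal is therefore reconstructing one of the cited proofs --- the information-complexity argument of Bar-Yossef, Jayram, Kumar and Sivakumar --- and most of its skeleton (the coordinate-wise direct sum, the collision-free conditioned distribution $\eta^k$ with auxiliary variables $D$, and the Hellinger/cut-and-paste single-coordinate bound) is the right one.

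However, your opening move creates a genuine gap. You invoke Yao's minimax principle to fix the public coins and pass to a deterministic protocol whose \emph{average} error under $\mu$ is at most $\delta$, and you then take $\mu = \eta^k$ to be the collision-free distribution, which is supported entirely on $1$-inputs of $\disj{k}$. Under such a $\mu$ the distributional complexity $D^\mu_\delta(\disj{k})$ is zero: the trivial zero-communication protocol that always outputs $1$ has average error $0$ under $\mu$. Worse, this step destroys exactly the property that your own ``technical heart'' relies on, namely correctness on the coordinate input $(1,1)$, which has probability zero under $\eta$ and is therefore invisible to any average-error guarantee; a protocol obtained by fixing coins against $\mu$ need not distinguish $(1,1)$ from anything, so the Hellinger separation you want to extract from correctness on $(1,1)$ is simply not available. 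The correct structure (as in BJKS) is to \emph{not} pass to distributional complexity: keep the randomized protocol with worst-case error $\delta$ on every input --- this is where correctness on $(1,1)$ comes from --- and use $\mu$ only to measure information, never to measure error. Public coins are then handled by Newman's theorem or by folding them into the conditioning, not by fixing them. With that correction the rest of your outline goes through. Your fallback via Razborov's corruption bound does not suffer from this problem, since there the hard distribution places constant mass on $0$-inputs, so the Yao step is legitimate in that branch.
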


\subsection{Lower Bounds for Weighted Diameter Computation in the \congest Model}
\begin{theorem} \label{2approx}
For any $n \geq 10$ and $B \geq 1$ and sufficiently small $\varepsilon$ any distributed randomized $\varepsilon$-error algorithm $A$ that computes a $(2 - 1 / poly(n))$-approximation of the diameter of a positively weighted graph requires at least $\Omega(\frac{n}{B})$ time for some $n$-node graph.
\end{theorem}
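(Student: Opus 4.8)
The plan is to reduce the two-party set disjointness problem $\disj{k}$ to the task of $(2-1/poly(n))$-approximating the weighted diameter, and then to invoke the $\Omega(k)$ communication lower bound of Theorem~\ref{appendix:thm:disj}. First I would describe a family of positively weighted $n$-node graphs $G_{a,b}$ in which the input-dependent edges split into a set controlled by Alice (depending only on $a\in\{0,1\}^k$) and a set controlled by Bob (depending only on $b\in\{0,1\}^k$), on top of a fixed input-independent core. The weights are to be chosen so that a distinguished pair $s,t$ has weighted distance $D_1$ when $a$ and $b$ encode disjoint sets and weighted distance $D_2$ with $D_2/D_1 > 2-1/poly(n)$ when the sets intersect, while every other pair of vertices stays within distance $D_1$ in both cases. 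Then $D_w(G_{a,b})=D_1$ in the disjoint case and $D_w(G_{a,b})=D_2$ in the intersecting case, so the value $\widetilde D$ returned by any $(2-1/poly(n))$-approximation lies in two disjoint intervals in the two cases and hence lets the players decide $\disj{k}$.

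For the construction I would start from the unweighted gadget of Frischknecht et al.~\cite{frischknecht2012networks}, as presented in~\cite{holzer2012optimal,wattenhofer2012lecture}, in which the critical $(s,t)$-distance rises from two hops to three hops exactly when the encoded sets intersect. I would then assign weights along the always-present two-hop connection and along the three-hop detour so that, measured by weight rather than by hop count, the disjoint case realizes a path of weight $D_1$ while the only surviving detour in the intersecting case has weight essentially $2D_1$. Taking $D_1$ to be a suitable multiple of $1/p$ with $p\in poly(n)$ keeps every weight inside the allowed set $\{a/p : a\in\{1,\dots,p^2\}\}$, hence $\BO(\log n)$-encodable, and lets the ratio $D_2/D_1$ be tuned to $2-1/poly(n)$ rather than to some fixed constant bounded away from $2$.

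To convert a fast algorithm into a cheap protocol I would partition $V$ into Alice's side $V_A$ and Bob's side $V_B$ so that all edges determined by $a$ lie inside $V_A$, all edges determined by $b$ lie inside $V_B$, and the cut $E(V_A,V_B)$ has only $C=\BO(n)$ edges. Given a $T$-round $\varepsilon$-error algorithm $A$, Alice and Bob simulate it round by round, each tracking the states of their own nodes and exchanging, per round, only the $B$-bit messages crossing the $C$ cut edges; using shared public coins this costs $\BO(T\cdot B\cdot C)$ bits and reproduces the output of $A$, thereby deciding $\disj{k}$ with error $\varepsilon$. Theorem~\ref{appendix:thm:disj} then forces $T\cdot B\cdot C=\Omega(k)$, and the Frischknecht encoding with $k=\Theta(n^2)$ and $C=\BO(n)$ yields $T=\Omega(n/B)$, as claimed.

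The main obstacle is the weighting step, where three requirements must be met at once: (i) the ratio $D_2/D_1$ has to be pushed all the way to $2-1/poly(n)$ rather than to a weaker constant below $2$; (ii) no unintended pair of vertices may become the diameter-realizing pair in either case, so the analysis must bound the weighted distance between representatives of \emph{every} pair of node classes, not just $s$ and $t$; and (iii) all weights must remain positive multiples of $1/p$ for $p\in poly(n)$ so as to stay $\BO(\log n)$-encodable, all while the partition keeps the cut $E(V_A,V_B)$ sparse enough for the simulation. Verifying (ii), namely that the engineered gap at $(s,t)$ genuinely controls the diameter and is not washed out by some longer shortest path created by the new weights, is the delicate part and requires a case analysis in both the disjoint and the intersecting setting.
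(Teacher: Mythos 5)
Your proposal is correct and follows essentially the same route as the paper: it starts from the same unweighted gadget of Frischknecht et al.\ (as in~\cite{holzer2012optimal,wattenhofer2012lecture}), reweights it so that the always-present connection across the cut has weight roughly half that of the detour forced by an intersection (the paper uses weight $1$ on input edges and the minimal weight $1/p$ on cut edges, giving diameters $1+1/p$ versus $2+1/p$ and hence ratio $2-1/(p+1)$), and then converts a $T$-round algorithm into a two-party protocol costing $\BO(T\cdot B\cdot n)$ bits over the $\BO(n)$-edge cut, contradicting the $\Omega(n^2)$ disjointness bound. The delicate points you flag, in particular verifying that no unintended pair realizes the diameter, are exactly what the paper discharges in its Lemmas~\ref{diam2} and~\ref{2approx2}.
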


We follow the strategy of~\cite{frischknecht2012networks} and reduce the function $disj_{k(n)^2}$ to finding the diameter of an graph $G$. Note that the graph in~\cite{frischknecht2012networks} is unweighted, while ours is weighted. We set a parameter $k(n)$ to be $k(n) = \lfloor \frac{n}{10} \rfloor$ and construct a graph $G_{a, b}$. We do so by defining a graph $G_a=(V_a,E_a)$ that depend on inputs $a$ and a graph $G_b=(V_b,E_b)$ that depends on $b$. Based on these sets we derive graph $G_{a,b}$. We start by construct sets of nodes
$L = \{l_v | v \in \{1, \dots, 2k(n)-1\}\}\text{ and }R = \{r_v | v \in \{1, \dots, 2k(n)-1\}\}.
$
\ Let
$
L_1 = \{l_v | v \in \{1, \dots, k(n)-1\}\}\text{ and }L_2 = \{l_v | v \in \{k(n), \dots, 2k(n)-1\}\}, 
$
\ and define 
$R_1 = \{r_v | v \in \{1, \dots, k(n)-1\}\}\text{ and }R_2 = \{r_v | v \in \{k(n), \dots, 2k(n)-1\}\}.$ We add a node $c_L$ to $V_a$ and a node $c_R$ to $V_b$, then add edges from $c_L$ to all nodes in $L$ and from $c_R$ to all nodes in $R$. We also add edges between each pair of nodes in $L_1$, $R_1$, $L_2$, and $R_2$, and from $l_i$ to $r_i$ for $i \in \{1, \dots, 2k(n)-1\}$. Finally, we add an edge from $c_L$ to $c_R$. Note that these sets of (right/left) nodes only depend on the lengths of the inputs. In the proof we define edges $E_a$ that connect nodes in $V_a$ depending on $a$. We also define edges $E_b$ that connect nodes in $V_b$ depending on $b$.

\begin{proof}

As in~\cite{frischknecht2012networks}, we can represent the $k(n)^2 - 1$ bits of input $a$ by the $k(n)^2$ possible edges between the $k(n)$ nodes $L_1$ and $k(n)$ nodes $L_2$. More specifically, we choose the mapping from integers in $\{1, \dots, k(n)^2 - 1\}$ to pairs of integers in $\{1, \dots, k(n)-1\} \times \{k(n), \dots, 2k(n)-1\}$, such that $i$ is mapped to $(l_{u_i}, l_{v_i}) = \left(i \mod k(n), k(n) + \left\lfloor \frac{i}{k(n)}\right\rfloor\right)$. We add edge $(l_{u_i}, l_{v_i})$ to $G_a$ if and only if $a(i) = 0$, and likewise represent the bits of $b$ by adding edge $(r_{u_i}, r_{v_i})$ to $G_b$ if and only if $b(i) = 0$.

We call the graph defined by these edges $G_a = (V_a, E_a)$, and construct a similar graph $G_b$ for input $b$. We define the cut-set $C_{k(n)^2} = \{(l_v, r_v) : v \in \{0, \dots, 2k(n)-1\}\}$ to be the $2k(n)$ edges connecting each $l_v$ to the corresponding $r_v$. We will refer to the sets of vertices
$L_1\cup R_1=\{l_v | v \in \{1, \dots, k(n)-1\}\} \cup \{r_v | v \in \{1, \dots, k(n)-1\}\}\text{ as }\textbf{UP }\text{(upper part of the graph)}$ 
and
$L_2\cup R_2=\{l_v | v \in \{k(n), \dots, 2k(n)-1\}\} \cup \{r_v | v \in \{k(n), \dots, 2k(n)-1\}\}\text{ as }\textbf{LP}$ (lower part of the graph). In the figure below, we note that the former is in the upper portion of the graph, and the latter is in the lower portion.
Finally, we set $G_{a,b} = G_a\cup G_b\cup C_k$.

Now we assign weights to the edges in this construction. We set the weight of every edge in $G_a$ and in $G_b$ to be $1$, and the weight of each edge in $C_{k(n)^2}$ to be $1/p$, the smallest possible weight (see definition of the weights in Section~\ref{sec:model}). 

\begin{lemma} \label{diam2}
The weighted diameter of $G_{a, b}$ is at most $2+1/p$.
\end{lemma}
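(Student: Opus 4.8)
The plan is to exploit the fact that the two center nodes $c_L$ and $c_R$ act as universal hubs within their respective halves of the graph. Recall that $c_L$ is joined to every node of $L$ and $c_R$ to every node of $R$, and that all of these star edges lie in $G_a$ respectively $G_b$ and therefore carry weight $1$; the single edge $(c_L, c_R)$ lies in the cut set $C_{k(n)^2}$ and hence has weight $1/p$. I would first record these three elementary facts, together with the observation that $1/p$ is a legal weight under the model's weight set (take $a = 1$), since every distance bound below is obtained simply by routing through the hubs.

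Next I would bound the weighted distance between an arbitrary pair of vertices by a short hub-routing path. For two left nodes $l_u, l_v$ (or for $c_L$ and some $l_v$), the path $l_u \to c_L \to l_v$ uses two weight-$1$ edges, so their distance is at most $2$; the right side is symmetric. The only remaining, and decisive, case is a left node $l_u$ together with a right node $r_v$: here the path $l_u \to c_L \to c_R \to r_v$ consists of two weight-$1$ edges and one weight-$1/p$ edge, for total weight $2 + 1/p$. The mixed pairs involving exactly one center node (for instance $c_L$ and $r_v$, via $c_L \to c_R \to r_v$) drop one unit edge and cost only $1 + 1/p$, so they are dominated by this bound.

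Collecting the cases shows that every pair of vertices is at weighted distance at most $2 + 1/p$, which is precisely the claimed bound, with the extremal pairs being the cross pairs consisting of one left and one right node. I would also remark that this upper bound holds regardless of the inputs $a$ and $b$: the $a$- and $b$-dependent edges inside $G_a$ and $G_b$ only introduce additional shortcuts and can never increase any shortest-path weight, so they are harmless when proving an upper bound on the diameter.

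I do not expect a genuine obstacle, since the statement is only an upper bound and the hub structure makes every pair reachable by a path of at most three edges with the stated weight. The one thing to be careful about is the bookkeeping: verifying that each star edge and each clique edge indeed belongs to $G_a$ or $G_b$ (and so has weight $1$), while only the cut edges $(l_v, r_v)$, including $(c_L, c_R)$, belong to $C_{k(n)^2}$ (and so have weight $1/p$). The genuine content of the construction, namely that this diameter is tight and encodes $\disj{k(n)^2}$, belongs to the subsequent lemmas rather than to this one.
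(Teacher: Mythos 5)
Your proposal is correct and follows essentially the same argument as the paper: a case analysis routing every pair through the hub nodes $c_L$ and $c_R$, with intra-side pairs costing at most $2$ (two weight-$1$ star edges) and cross pairs costing at most $2+1/p$ via the weight-$1/p$ edge $(c_L,c_R)$. Your additional remarks—that the input-dependent edges only add shortcuts and cannot hurt an upper bound, and the careful bookkeeping of which edges carry weight $1$ versus $1/p$—are consistent with (and slightly more explicit than) the paper's proof.
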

\begin{proof}
This proof can be found in Appendix~\ref{app:diam2}.
\end{proof}

Following the ideas of~\cite{frischknecht2012networks}, we reduce the problem of deciding disjointness between sets $a$ and $b$ to computing the diameter of a graph.

\begin{figure}[ht!]
	\begin{center}
		\includegraphics[width=.6\textwidth]{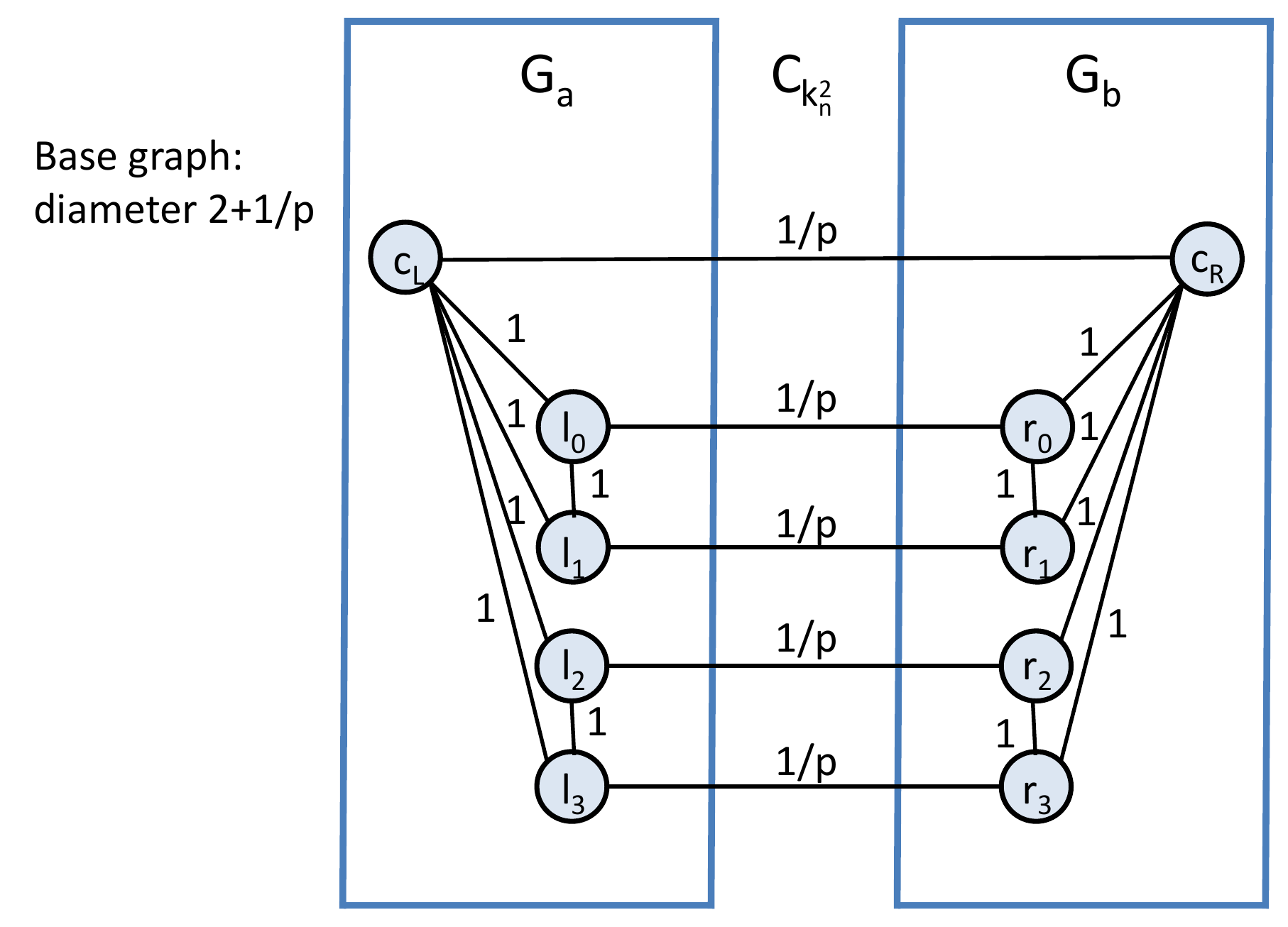}
	\end{center}
	\caption{Base graph of (weighted) diameter $2+1/p$.}
\end{figure}

\begin{figure}[ht!]
	\begin{center}
		\includegraphics[width=.6\textwidth]{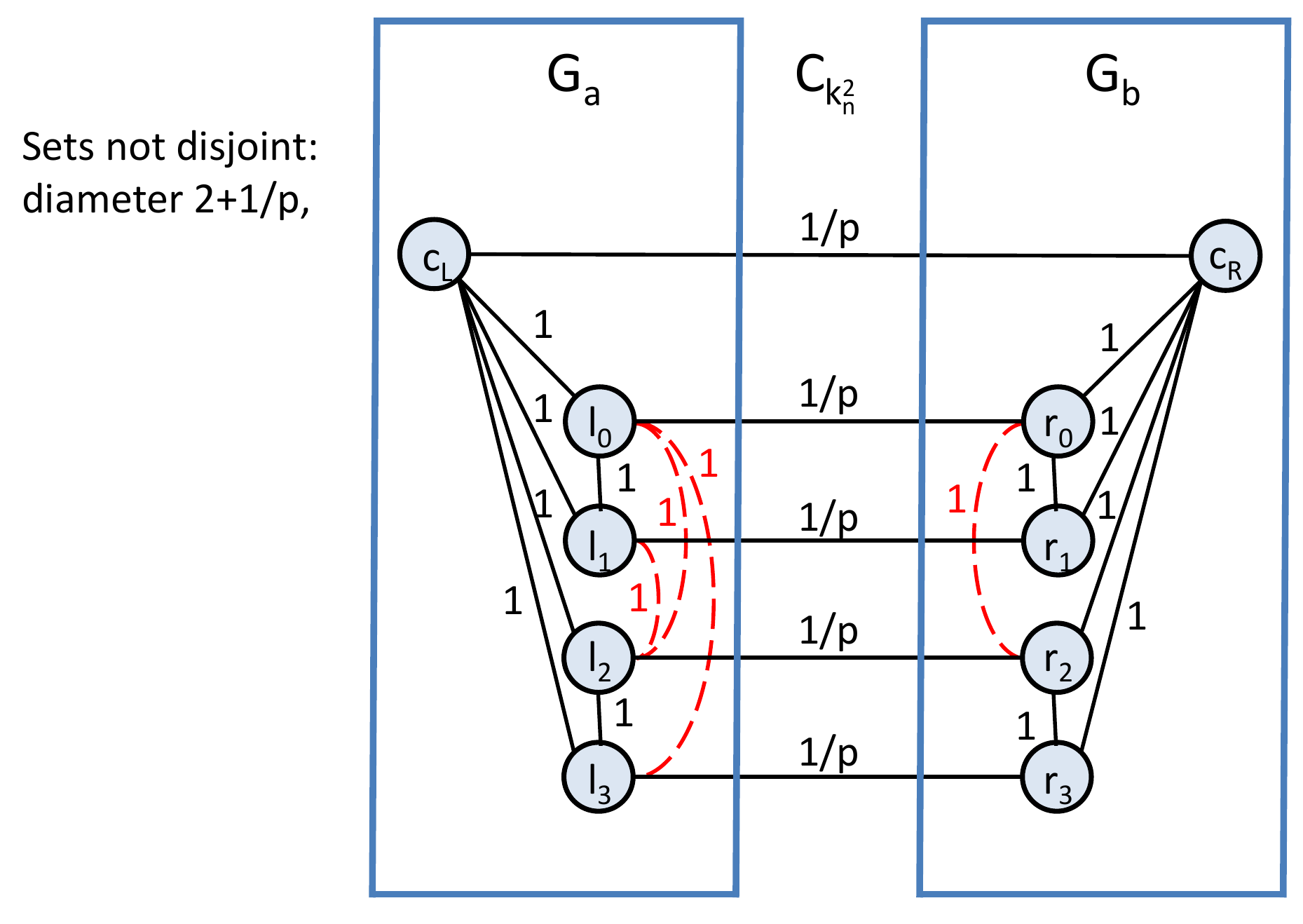}
\\ 
		\includegraphics[width=.6\textwidth]{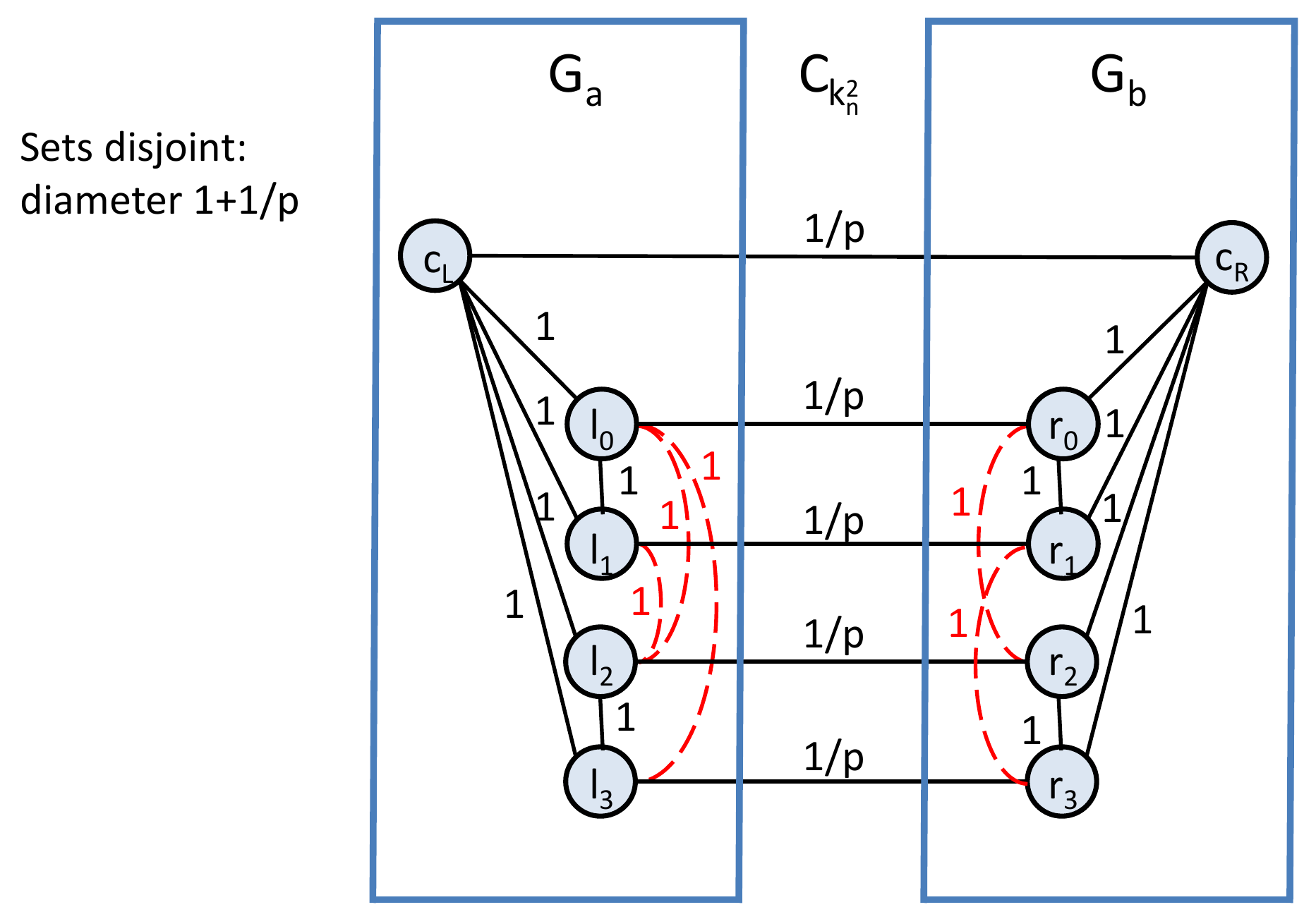}
	\end{center}
	\caption{Top: Input sets $a$ and $b$ are not disjoint: the index $i$ corresponding to the pair $(1, 3)$ has $a(i) = b(i) = 1$, s.t.  the diameter is $2+1/p$. Bottom: Input sets $a$ and $b$ are disjoint: every pair of integers $(i, j) \in \{0, \dots, k(n)-1\} \times \{k(n), \cdots, 2k(n)-1\}$ has either $(l_i, l_j) \in G$ or $(r_i, r_j) \in G$, so the diameter is $1+1/p$.}
\end{figure}

\begin{lemma} \label{2approx2}
The diameter of $G_{a, b}$ is 1 if the sets $a$ and $b$ are disjoint, else it is $2$.
\end{lemma}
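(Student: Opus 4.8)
The plan is to determine the weighted diameter of $G_{a,b}$ separately in the disjoint and non-disjoint cases, reading the values $1$ and $2$ in the statement as the leading terms of the weighted diameter (precisely $1+1/p$ and $2+1/p$, as in the figure), and relying throughout on the correspondence between disjointness and the edge set. Recall that index $m$ is mapped to the pair $(l_{u_m},l_{v_m})$ with $u_m\in\{1,\dots,k(n)-1\}$ and $v_m\in\{k(n),\dots,2k(n)-1\}$, that the left edge $(l_{u_m},l_{v_m})$ is present iff $a(m)=0$, and that the mirror edge $(r_{u_m},r_{v_m})$ is present iff $b(m)=0$. Hence $a$ and $b$ are disjoint iff for every $m$ at least one of these two edges is present, and they are non-disjoint iff some index $m$ has both absent. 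I will also use that the only edges joining the two halves are the center edge $(c_L,c_R)$ (weight $1$) and the cut edges $(l_i,r_i)$ (weight $1/p$), and that every cut edge stays inside a single ``column'' $\{l_i,r_i\}$.

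The heart of the lemma is the non-disjoint case, where I claim the diameter is exactly $2+1/p$. Fix $m$ with $a(m)=b(m)=1$, so both $(l_{u_m},l_{v_m})$ and $(r_{u_m},r_{v_m})$ are missing, and consider the cross pair $(l_{u_m},r_{v_m})\in L_1\times R_2$. The path $l_{u_m}\to c_L\to l_{v_m}\to r_{v_m}$ has weight $2+1/p$, so with Lemma~\ref{diam2} it remains to show $d_w(l_{u_m},r_{v_m})\ge 2+1/p$. For this I count the heavy (weight-$1$) edges on an arbitrary path $P$. Since cut edges preserve columns, before $P$ uses its first heavy edge it can visit only column $u_m$; that first heavy edge cannot reach column $v_m$, because a single heavy edge stays on one side (no heavy edge joins a left vertex to a right vertex), so it could only be the left bridge $(l_{u_m},l_{v_m})$ or the right bridge $(r_{u_m},r_{v_m})$, both of which are absent. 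Thus $P$ lands in another column or at a center and still needs a further heavy edge to reach column $v_m$, so $P$ contains at least two heavy edges. If $P$ crosses via a cut edge, that edge contributes an extra $1/p$ on top of the two heavy edges, giving $w(P)\ge 2+1/p$; if $P$ crosses only via the center edge then $w(P)\ge 3$. Either way $d_w(l_{u_m},r_{v_m})=2+1/p$, and by Lemma~\ref{diam2} this is the diameter.

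For the disjoint case I show that almost every pair is within $1+1/p$, so the diameter collapses. Disjointness supplies, for each column pair $(u,v)$ with $u$ upper and $v$ lower, either the left bridge $(l_u,l_v)$ or the right bridge $(r_u,r_v)$. This makes every cross pair $(l_i,r_j)$ reachable in weight $1+1/p$: if $i=j$ use the cut edge; if $i,j$ lie in the same part use a clique edge followed by a cut edge; and if they straddle the two parts route through whichever bridge disjointness provides, i.e.\ $l_i\to l_j\to r_j$ or $l_i\to r_i\to r_j$. Pairs involving a center and pairs inside one clique are likewise within the same budget, so no such pair exceeds $1+1/p$. The only pairs that can cost more are two same-side vertices in different parts whose direct bridge is missing but whose mirror bridge is present (e.g.\ $l_u,l_v$ with $a(m)=1$, $b(m)=0$), reached in weight $1+2/p$ via $l_u\to r_u\to r_v\to l_v$; this is still strictly below $2+1/p$, and the ratio of the two diameters tends to $2$, which is all the subsequent approximation argument needs.

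I expect the main obstacle to be precisely this lower bound in the non-disjoint case, namely certifying that no path beats $2+1/p$; the column/heavy-edge counting above is the device I would use to replace an unwieldy enumeration of how a shortest path crosses the cut. A secondary nuisance is the boundary behaviour of the index map $i\mapsto(i\bmod k(n),\,k(n)+\lfloor i/k(n)\rfloor)$, which I would check so that the bottleneck index $m$ genuinely yields a pair in $L_1\times R_2$, and the harmless one-off gap between $1+1/p$ and $1+2/p$ in the disjoint case, which I would confirm does not disturb the approximation ratio.
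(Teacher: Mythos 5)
Your proof is correct and, at its core, follows the same strategy as the paper's: in the non-disjoint case you take the witness index $m$ with $a(m)=b(m)=1$, observe that both bridges $(l_{u_m},l_{v_m})$ and $(r_{u_m},r_{v_m})$ are absent, and conclude $d_w(l_{u_m},r_{v_m})=2+1/p$; in the disjoint case you route every cross pair through whichever bridge disjointness supplies, exactly as the paper does. Where you genuinely diverge is in rigor, and this buys you something real. For the lower bound the paper only enumerates the two \emph{two-edge} paths of weight $1+1/p$ and notes they do not exist, which leaves unaddressed paths with more edges (e.g.\ several cut edges plus one heavy edge); your column-preservation/heavy-edge-counting argument closes exactly this gap, since it shows \emph{any} $l_{u_m}$-to-$r_{v_m}$ path needs two weight-$1$ edges plus a crossing edge. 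More importantly, your treatment of the disjoint case catches an error in the paper: for two same-side nodes in different parts whose own bridge is missing, the paper's appendix (case 2) concatenates a path of length $1+1/p$ with a cut edge of weight $1/p$ and asserts the result has length $1+1/p$, when it is in fact $1+2/p$. So the lemma as literally stated (diameter exactly $1+1/p$ when disjoint) is false for the construction; your weaker but correct bound $\leq 1+2/p$ is what actually holds, and, as you note, the ratio $(2+1/p)/(1+2/p)$ still tends to $2$, which is all that Theorem \ref{2approx} needs (its explicit inequality should then be rechecked with the adjusted constant, since $(1+2/p)(2-1/p) > 2+1/p$, so the approximation factor in the theorem must be taken slightly below $2-3/p$ rather than $2-1/p$).

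One small discrepancy to fix: you assign weight $1$ to the center edge $(c_L,c_R)$, whereas the paper (see the proof of Lemma \ref{diam2}) treats it as having weight $1/p$. Nothing breaks either way --- under the paper's reading a center crossing costs $\geq 2+1/p$ rather than your $\geq 3$, and your upper-bound paths never use that edge --- but your parenthetical claim that ``no heavy edge joins a left vertex to a right vertex'' is literally true only under the paper's weighting; under yours the center edge is a heavy left-right edge, and your argument survives only because that edge is not incident to either column $u_m$ or $v_m$. State the weighting once, consistently, and that loose end disappears.
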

\begin{proof}
This proof can be found in Appendix~\ref{app:2approx2}.
\end{proof}

We are now equipped to prove Theorem \ref{2approx} above. We use the graph $G_{a, b}$ constructed above to show that any algorithm $A$ that computes a $(2-1/p)$-approximation of the diameter requires $\Theta(\frac{n}{B})$ time. 

First note, that in case the diameter is $(1+1/p)$ any $A$ must output a value of at most $(1+1/p)(2-1/p)= 2+1/p-1-p^2$. As this value is strictly smaller than the other possible diameter of $G_{a,b}$, which is ($2+1/p$), any $(2-1/p)$-approximation algorithm can decide whether the $D_w(G_{a, b})$ is $(1+1/p)$ or $(2+1/p)$. Based on this one can decide if inputs $a$ and $b$, that were used to construct the graph $G_{a,b}$, are disjoint.

However, we know due to Theorem~\ref{appendix:thm:disj} that any algorithm must exchange $\Omega(k(n)^2)$ bits of information through the edges in $C_{k(n)}$ in order to decide if $a$ and $b$ are disjoint. As the bandwidth of $C_{k(n)}$ is $\BO(|C_{k(n)}|\cdot B)=\BO(k(n)\cdot B)$, we conclude that $\Omega(k(n)/B)$ rounds are necessary to do so.
Due to the choice of $k(n)$ we conclude that $\Omega(\frac{n}{B})$ rounds are necessary to $(2-1/p)$-approximate the diameter of a graph. 
\end{proof}

\subsection{Lower Bounds for Weighted Diameter Computation in the \vcc Model}\label{sec:2-approx-vcc}

\begin{theorem} \label{thm:2-party-lb}
Given a two-party communication problem $f'$ that can be reduced to a graph $G_{a,b}$ and a randomized algorithm $A$ in the \vcc model, if $R_\epsilon^{cc-pub}(f')$ is a lower bound on the number of bits that must be communicated in $f'$, then $A$ must take at least $\frac{R_\epsilon^{cc-pub}(f')}{nB}$ rounds in the \vcc model.
\end{theorem}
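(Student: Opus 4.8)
The plan is to prove this by a simulation argument that turns the \vcc algorithm $A$ into a two-party protocol for $f'$, and then read off the round lower bound from $R_\epsilon^{cc-pub}(f')$ in the contrapositive. Since $f'$ reduces to the graph $G_{a,b}$, the node set of $G_{a,b}$ splits naturally into a part $V_A$ assigned to Alice (the nodes whose incident input-dependent edges are determined by $a$, together with $c_L$) and a part $V_B$ assigned to Bob (determined by $b$, together with $c_R$), with $|V_A| + |V_B| = n$. The edges that do not depend on the inputs — the cut $C$ and the $c_L$–$c_R$ edge — are fixed and known to both parties. The crucial structural feature I would exploit is that every edge whose presence depends on $a$ is \emph{internal} to $V_A$ and every edge depending on $b$ is internal to $V_B$, so each party knows the complete initial state (incident edges and weights) of all of its own nodes.

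Next I would have Alice and Bob simulate $A$ round by round. Since we are in the public-coin setting, both parties share the random string $A$ uses and therefore agree on every random choice each node makes. I maintain the invariant that before round $t$ both parties know all messages broadcast in rounds $1,\dots,t-1$. Under this invariant each party can locally compute the round-$t$ broadcast of every node it owns, because in the \vcc model a node's broadcast is a function only of its initial state, the shared randomness, and the messages it has received so far, all of which the owning party possesses. Alice then sends Bob the $B$-bit broadcasts of all nodes in $V_A$, and Bob sends Alice the $B$-bit broadcasts of all nodes in $V_B$; after this exchange both parties again know all $n$ broadcasts of round $t$, restoring the invariant. When $A$ halts and its designated output node determines the graph parameter (e.g.\ the weighted diameter), that value fixes $f'(a,b)$, and since both parties have observed every broadcast they can both output the answer with the same error probability $\epsilon$ as $A$.

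Finally I would count the communication. Per simulated round the parties exchange $(|V_A| + |V_B|)\cdot B = nB$ bits, so a $T$-round run of $A$ produces an $\epsilon$-error protocol for $f'$ exchanging at most $T\cdot nB$ bits. Because any such protocol must exchange at least $R_\epsilon^{cc-pub}(f')$ bits, we obtain $T\cdot nB \ge R_\epsilon^{cc-pub}(f')$, i.e.\ $T \ge R_\epsilon^{cc-pub}(f')/(nB)$, which is exactly the claimed bound. I expect the main obstacle to be pinning down cleanly that the reduction is \emph{local} in the required sense — that the encoding of $f'$ into $G_{a,b}$ really places every $a$-dependent edge inside $V_A$ and every $b$-dependent edge inside $V_B$ — since this locality is precisely what allows each party to compute its own nodes' broadcasts with no extra communication. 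Once that is established, the per-round bit count $nB$ and the appeal to the public-coin lower bound are routine, and in particular plugging in the disjointness bound of Theorem~\ref{appendix:thm:disj} for the construction of Theorem~\ref{2approx} yields the $\tilde\Omega(n)$ round lower bound in the \vcc model.
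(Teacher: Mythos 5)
Your proposal is correct and takes essentially the same approach as the paper: the paper's proof is precisely this cut-counting argument, observing that in the broadcast setting each side can emit at most $|G_a|\cdot B$ (respectively $|G_b|\cdot B$) bits of new information per round, so at most $(|G_a|+|G_b|)\cdot B = nB$ bits cross between the two sides per round, and the communication complexity lower bound then forces at least $R_\epsilon^{cc-pub}(f')/(nB)$ rounds. Your explicit Alice/Bob round-by-round simulation (with the locality observation that all $a$-dependent edges lie inside Alice's part and all $b$-dependent edges inside Bob's) is simply the standard formalization of the reduction that the paper states more tersely.
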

Before starting with the proof, we want to stress that that the edges of $G_{a,b}$ remain the only edges with weights. Other edges of the clique not mentioned in the construction of $G_{a,b}$ are only present in the \cc model (not in the \congest model studied in Section~\ref{2approx}) and are only used for communication. These (additional) communication edges are assigned no weight, as they are not part of the lower bound construction and do not affect the diameter of the graph $G_{a,b}$.

\begin{proof}

In each round, any algorithm can send at most $|G_a| \cdot B$ bits of information from $G_a$ to $G_b$, as each vertex in $G_a$ must broadcast the same $B$ bits to all other vertices in $G_b$ in the $\vcc$ model. Similarly, any algorithm can send at most $|G_b| \cdot B$ bits from $G_b$ to $G_a$. There are no further nodes outside of $G_{a,b}$ that could increase the bandwidth. Thus, any algorithm can exchange at most $(|G_a| + |G_b|) \cdot B = nB$ bits between $G_a$ and $G_b$ in each round. Therefore $\frac{R_\epsilon^{cc-pub}(f')}{nB}$ is a lower bound on the number of rounds that algorithm $A$ must take.
\end{proof}

\begin{theorem}\label{thm:2-approxLB-vcc}
Computing a $(2 - o(1))$-approximation of the diameter in positively weighted graphs in the \vcc model takes $\Omega(n/B)$ rounds.
\end{theorem}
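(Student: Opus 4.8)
The plan is to combine the two preceding results in this subsection. Theorem~\ref{2approx} establishes that any randomized $\varepsilon$-error algorithm computing a $(2-1/poly(n))$-approximation of the weighted diameter on the specific graph family $G_{a,b}$ requires deciding $\disj{k(n)^2}$, and by Theorem~\ref{appendix:thm:disj} this forces $\Omega(k(n)^2)$ bits to cross the relevant cut. Meanwhile Theorem~\ref{thm:2-party-lb} bounds the bandwidth of that cut in the \vcc model by $nB$ bits per round rather than the $\BO(k(n)\cdot B)$ available in the \congest model. So first I would instantiate the generic reduction of Theorem~\ref{thm:2-party-lb} with the concrete two-party problem $f' = \disj{k(n)^2}$ and the graph $G_{a,b}$ built in Section~\ref{2approx}.

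Next I would verify that a $(2-o(1))$-approximation of the diameter on $G_{a,b}$ indeed decides disjointness. This is exactly the gap argument already carried out in the proof of Theorem~\ref{2approx}: by Lemma~\ref{2approx2} the true diameter is $1+1/p$ when $a,b$ are disjoint and $2+1/p$ otherwise, and these two values are separated by a factor of essentially $2$, so any approximation factor of the form $2-1/poly(n)$ (equivalently $2-o(1)$, since $1/poly(n) = o(1)$) produces output intervals that do not overlap. Hence the approximation algorithm $A$ distinguishes the two cases and solves $\disj{k(n)^2}$ with error at most $\varepsilon$.

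I would then assemble the round lower bound. Since $R_\epsilon^{cc-pub}(\disj{k(n)^2}) = \Omega(k(n)^2)$ by Theorem~\ref{appendix:thm:disj}, Theorem~\ref{thm:2-party-lb} gives that $A$ needs at least
\[
\frac{R_\epsilon^{cc-pub}(\disj{k(n)^2})}{nB} = \Omega\!\left(\frac{k(n)^2}{nB}\right)
\]
rounds. Substituting the chosen value $k(n) = \floor{n/10}$, so that $k(n)^2 = \Theta(n^2)$, this simplifies to $\Omega(n^2/(nB)) = \Omega(n/B)$, which is precisely the claimed bound.

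The main subtlety to get right is not any calculation but the modelling check underlying Theorem~\ref{thm:2-party-lb}: one must ensure that the only communication edges crossing between $G_a$ and $G_b$ contribute at most $nB$ bits per round in the broadcast setting, i.e. that the extra clique edges present only for communication carry no weight and do not shortcut the diameter (as stressed in the remark preceding that theorem), and that a standard two-party simulation argument lets Alice simulate the nodes of $G_a$ and Bob those of $G_b$ so that one \vcc round translates into at most $nB$ exchanged bits. Since Theorem~\ref{thm:2-party-lb} already packages this reasoning, the remaining work is simply to confirm its hypotheses apply verbatim to $\disj{k(n)^2}$ and $G_{a,b}$, and the theorem then follows immediately.
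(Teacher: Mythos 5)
Your proposal is correct and follows essentially the same route as the paper: the paper's own proof likewise invokes Theorem~\ref{2approx} (whose reduction from $\disj{k(n)^2}$ supplies the $\Omega(n^2)$-bit communication requirement) and then applies Theorem~\ref{thm:2-party-lb} to convert this into an $\Omega(n/B)$ round bound in the \vcc model. Your write-up simply makes explicit the instantiation $f' = \disj{k(n)^2}$, the gap argument via Lemma~\ref{2approx2}, and the arithmetic $k(n)^2 = \Theta(n^2)$ that the paper's two-sentence proof leaves implicit.
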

\begin{proof}
Computing a $(2 - o(1))$-approximation of the diameter in positively weighted graphs is shown to require the exchange of $\Omega(n^2)$ bits of information, by Theorem \ref{2approx} above. The statement then follows directly from an application of Theorem \ref{thm:2-party-lb}.
\end{proof}
\begin{theorem}\label{thm:exactLB-vcc}
Computing the diameter exactly in unweighted graphs takes $\Omega(n/B)$ in the \vcc model.
\end{theorem}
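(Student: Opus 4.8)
The plan is to re-use the reduction behind Theorem~\ref{thm:2-approxLB-vcc}, observing that for \emph{exact} diameter computation no approximation slack is required, so the unweighted version of the construction already suffices. This is essentially the unweighted reduction of Frischknecht et al.~\cite{frischknecht2012networks}, now fed through the \vcc transfer of Theorem~\ref{thm:2-party-lb}.

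First I would keep the graph $G_{a,b}$ exactly as built above but regard it as unweighted, assigning weight $1$ to every edge (in particular to the cut-set edges of $C_{k(n)^2}$, which carried weight $1/p$ before). The underlying edge set, the partition into $G_a$ and $G_b$, and the cut across which $a$ must interact with $b$ are all unchanged, so the hop-diameter of $G_{a,b}$ is the quantity analyzed in Lemma~\ref{2approx2}: it equals $1$ when $a$ and $b$ are disjoint and $2$ otherwise. In either case these are two distinct integers, which is all the argument needs.

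Second, since an algorithm computing the diameter \emph{exactly} returns this value precisely, reading it off immediately decides whether $a$ and $b$ are disjoint. This is a reduction from $\disj{k(n)^2}$ to exact unweighted diameter computation on $G_{a,b}$. By Theorem~\ref{appendix:thm:disj}, $R_\epsilon^{cc-pub}(\disj{k(n)^2}) = \Omega(k(n)^2) = \Omega(n^2)$, using $k(n) = \floor{n/10}$. Applying Theorem~\ref{thm:2-party-lb} with $f' = \disj{k(n)^2}$ then yields a lower bound of $\frac{R_\epsilon^{cc-pub}(\disj{k(n)^2})}{nB} = \frac{\Omega(n^2)}{nB} = \Omega(n/B)$ rounds in the \vcc model.

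I do not anticipate a real obstacle: exact computation is strictly easier to reduce from than the $(2-o(1))$-approximation, because we no longer need to check that the approximation factor keeps the two admissible diameters separated---an exact answer trivially distinguishes two distinct integers. The only point to confirm is that the unweighted construction still satisfies the hypotheses of Theorem~\ref{thm:2-party-lb}, namely that the only edges crossing between $G_a$ and $G_b$ are the cut-set edges (together with the single $c_L$--$c_R$ edge), so that at most $nB$ bits can cross this cut per round in the \vcc model, with each of the $n$ nodes broadcasting $B$ bits. Since the vertex partition and cut are identical to the weighted case, this is immediate.
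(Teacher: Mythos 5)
Your overall route is the same as the paper's: regard the construction with unit weights, observe that an exact diameter algorithm decides $\disj{k(n)^2}$, invoke the $\Omega(k(n)^2)=\Omega(n^2)$ communication bound, and push it through Theorem~\ref{thm:2-party-lb}. (The paper does exactly this, except that it cites \cite{frischknecht2012networks} directly for the fact that exact unweighted diameter computation on $G_{a,b}$ requires $\Omega(n^2)$ bits of communication, rather than re-deriving the unweighted gap from its own weighted lemmas.)

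There is, however, a concrete error in your second step: the unweighted diameters are not $1$ (disjoint) versus $2$ (not disjoint), and they cannot be read off from Lemma~\ref{2approx2} unchanged. Lemma~\ref{2approx2} concerns the weighting in which the edges of $C_{k(n)^2}$ and the edge $(c_L,c_R)$ have weight $1/p$; its correct conclusion (see the appendix restatement --- the main-text statement ``$1$ versus $2$'' is a typo) is $1+1/p$ versus $2+1/p$. Once every edge has weight $1$, any path between $G_a$ and $G_b$ pays a full unit for the crossing edge, and the two values become $2$ and $3$. Indeed, the diameter is never $1$: the only edges between $L$ and $R$ are the matching edges $(l_i,r_i)$ and $(c_L,c_R)$, so for instance $l_1$ and $r_2$ are never adjacent. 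And in the non-disjoint case the pair $(l_\nu,r_\mu)$ has distance exactly $3$, since both two-hop candidates $(l_\nu,l_\mu,r_\mu)$ and $(l_\nu,r_\nu,r_\mu)$ are missing an edge, while $(l_\nu,c_L,c_R,r_\mu)$ uses three. This slip does not break the skeleton of your argument --- as you yourself note, all that is needed is that the two cases yield two distinct values, and $2\neq 3$ --- but as written the proof asserts a false intermediate claim and leans on Lemma~\ref{2approx2} for a weighting it does not cover. To make it rigorous you must either redo the short case analysis in the unit-weight setting (obtaining $2$ versus $3$) or simply cite \cite{frischknecht2012networks} for it, which is what the paper does; after that correction, your simulation bound of $nB$ bits per round and the final division coincide with the paper's proof of Theorem~\ref{thm:exactLB-vcc}.
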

\begin{proof}
Computing the exact diameter of unweighted version of the graph $G_{a,b}$ is shown to require $\Omega(n^2/B)$ bits of information to be exchanged in~\cite{frischknecht2012networks}. Thus, the result follows by Theorem \ref{thm:2-party-lb} using similar arguments as in the proof of Theorem~\ref{thm:2-approxLB-vcc}.
\end{proof}
\begin{remark}\label{rem:2diam}
Note that a $2$-approximation of the diameter of positively weighted graphs is achievable by computing SSSP starting in an arbitrary node, and returning twice the length of the largest distance computed. To compute (exact) SSSP-algorithm we can use the SSSP-algorithm presented in~\cite{nanongkai2014distributed}, that runs in $\tilde\BO(\sqrt{n})$ time.
\end{remark} 

\section{Deterministic Hitting Set Computation in the \vcc Model.} \label{sec:vccHS}

\begin{definition}
Given a node $u \in V$, the set $S^k(u)$ of a node $u \in G$ contains the $k$ nodes closest to $u$ in a weighted graph $G$, with ties broken by node ID. In other words, $S^k(u) \subset V$ has the following properties:
\begin{enumerate}
\item $|S^k(u)| = k$, and
\item for all $s \in S^k(u)$ and $t \notin S^k(u)$, either (i) $d_w(u, s) < d_w(u, t)$, or (ii) $d_w(u, s) = d_w(u, t)$ and the ID of $s$ is smaller than the ID of $t$.
\end{enumerate}
\end{definition}

\begin{definition}
A $k$-hitting set $S$ of a graph $G = (V, E)$ is a set of nodes such that, for every node $v \in V$, there is at least one node of $S$ in $S^k(v)$.
\end{definition}

Algorithm~\ref{alg:HS} takes as input a graph $G$ and an integer $k$, and returns a $k$-hitting set $S \subseteq V$. The algorithm works as follows: each node starts by broadcasting its $k$ incident edges of smallest weight to all other nodes (Lines \ref{hsl1}-\ref{hsl2}). If the node does have less than $k$ neighbors, it just broadcasts the weight of all its incident edges. This enables every node $u$ to locally compute a set $S^k(u)$ (Line \ref{hsl22}), consisting of the $k$ closest nodes to $u$ in $G$ (\cite{nanongkai2014distributed}, Observation 3.12). By closest we refer to the distance of nodes to $u$ and remark that $S^k(u)$ always has $k$ nodes for any $k\leq n$, as the graph is connected. We initialize $S := \emptyset$; $S$ is updated over time until it is our desired $k$-hitting set. Let at any time $R$ be composed of the sets $S^k(v)$ such that $S^k(v)\ \cap\ S = \emptyset$ (initially $R$ contains all $S^k(v)$). We repeatedly find the vertex $v_{max}$ that is contained in the largest number of elements in $R$ (breaking ties by minimum node ID). We then add this $v_{max}$ to $S$ and update $R$ accordingly. In Lemma~\ref{greedyvc} we show that this method of greedily constructing a hitting set achieves a $\BO(\log n)$-approximation of the smallest possible hitting set. 

\begin{lemma}\label{greedyvc}
Given a graph $G$, if the smallest possible hitting set uses $N$ vertices, then $S$ contains at most $\BO(N \log n)$ vertices.
\end{lemma}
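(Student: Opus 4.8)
The plan is to recognize the greedy procedure as the classical greedy algorithm for set cover and to invoke its standard $H_n \approx \ln n$ approximation analysis. First I would set up the correspondence explicitly. Regard the collection of sets $\{S^k(v) : v \in V\}$ as the $n$ \emph{elements} that must be covered, and regard each vertex $u \in V$ as a \emph{covering set} $T_u := \{v \in V : u \in S^k(v)\}$, i.e. the collection of those $S^k(v)$ that contain $u$. A set of vertices $S$ is a $k$-hitting set precisely when $\bigcup_{u \in S} T_u = V$, so computing a small hitting set is exactly computing a small set cover over this ground set of size $n$. Under this correspondence the quantity $R$ maintained by the algorithm is the family of still-uncovered elements, and choosing the vertex $v_{max}$ contained in the largest number of elements of $R$ is exactly the greedy rule of picking the covering set that covers the most uncovered elements (the tie-break by minimum ID is irrelevant to the count).

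Next I would carry out the potential-decay argument. Write $r_t$ for the number of uncovered elements after $t$ greedy steps, so $r_0 = n$. The key step is the averaging bound: since the optimum hitting set $S^*$ with $|S^*| = N$ hits every $S^k(v)$, in particular its $N$ vertices jointly cover all of the $r_t$ currently-uncovered elements; hence by pigeonhole at least one vertex of $S^*$ covers $\ge r_t / N$ uncovered elements. As the greedy rule selects a vertex covering the maximum number of uncovered elements over \emph{all} vertices, the vertex it picks covers at least this many, giving the recurrence $r_{t+1} \le r_t(1 - 1/N)$ and therefore $r_t \le n(1 - 1/N)^t \le n\,e^{-t/N}$.

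Finally I would conclude. Taking $t = \lceil N \ln n \rceil + 1$ gives $r_t \le n \cdot e^{-\ln n} < 1$; since $r_t$ is a nonnegative integer this forces $r_t = 0$, i.e. every element is covered and $S$ is a valid $k$-hitting set after $\BO(N \log n)$ steps, each of which adds exactly one vertex to $S$. Hence $|S| = \BO(N \log n)$, as claimed.

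I expect no deep obstacle here, since the statement is an instance of a textbook bound; the only point requiring care is the averaging step, where one must use the definition of a hitting set correctly to argue that the $N$ optimal vertices really do cover \emph{all} remaining uncovered $S^k(v)$'s at every stage, so that the $r_t/N$ progress guarantee applies throughout the execution rather than only at the start.
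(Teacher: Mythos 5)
Your proposal is correct and follows essentially the same route as the paper's own proof: the classical greedy set-cover analysis, using the pigeonhole/averaging step (that the optimal $N$-vertex hitting set still hits all remaining uncovered sets $S^k(v)$, so some vertex covers a $1/N$ fraction of them) to obtain the decay $r_{t+1} \leq r_t(1-1/N)$ and then taking $t = \BO(N \log n)$ iterations. Your explicit set-cover reformulation and the final observation that $r_t < 1$ forces $r_t = 0$ are just slightly more detailed presentations of what the paper does.
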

\begin{proof}
This proof can be found in Appendix \ref{app:greedyvc}.
\end{proof}

\begin{algorithm}
\caption{$\tilde\BO(n/k)$-time deterministic $(k)$-hitting set algorithm}\label{alg:HS}
\begin{algorithmic}[1]

\Procedure{HittingSet}{$G, k$}
\Comment{as executed by each node $u \in G$}

\For{$i \in \{1, \dots, k\}$} \label{hsl1}
	\State Broadcast the $i$-th lowest weight adjacent edge to all nodes.
\EndFor \label{hsl2}

\State Locally compute $S^k(u)$. \label{hsl22}

\For{$i \in \{1, \dots, k\}$} \label{hsl11}
	\State \parbox[t]{\dimexpr\linewidth-\algorithmicindent}{Broadcast the $i$-th lowest weight edge in $S^k(u)$ to all nodes, and add the received edge from each node $v$ to a set $S_v$.\strut}
\EndFor \label{hsl112}

\State $S \gets \emptyset$ \label{hsl3}
\State $R \gets V$
\While{$\exists S_i : S_i \cap S = \emptyset$}
	\State $w \gets NULL$
	\State $R \gets G\ \backslash\ S$
	\ForAll{nodes $v \in R$}
		\State $N_v \gets \{S_i : v \in S_i\text{ and } S_i \cap S \neq \emptyset\}$
		\If{$w = NULL$ or $|N_v| > |N_w|$}
			\State $w \gets v$
		\EndIf
	\EndFor
	\State $S \gets S \cup \{w\}$
\EndWhile \label{hsl4}

\State \textbf{return} S

\EndProcedure

\end{algorithmic}
\end{algorithm}

\begin{lemma} \label{lem:HS-runtime}
Procedure \textsc{HittingSet} described in Algorithm~\ref{alg:HS} computes a $k$-hitting set of size $\tilde{\BO}(n/k)$ in $\BO(k)$ rounds.\footnote{By using the $\BO$-notation we implicitly assume that $k\leq n^{1-\text{polylog} n}$, which will always be the case in this paper.}
\end{lemma}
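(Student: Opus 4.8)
The plan is to prove the two assertions of the lemma—the $\BO(k)$ round bound and the $\tilde\BO(n/k)$ size bound—separately. For the round complexity, I would account for the algorithm phase by phase. The two broadcast loops (Lines~\ref{hsl1}--\ref{hsl2} and Lines~\ref{hsl11}--\ref{hsl112}) each perform $k$ iterations, and in each iteration every node broadcasts a single edge; since an edge (two $\BO(\log n)$-bit IDs together with an $\BO(\log n)$-bit weight) fits into one $B=\BO(\log n)$-bit message, each iteration is one round, giving $\BO(k)$ communication rounds in total. The key observation, which I would state explicitly, is that the greedy loop (Lines~\ref{hsl3}--\ref{hsl4}) uses \emph{no} communication: after the first loop every node can locally reconstruct its own $S^k(u)$ (\cite{nanongkai2014distributed}, Observation~3.12), and after the second loop every node has received the edges describing $S^k(v)$ for \emph{every} $v$, so the collection $\{S^k(v):v\in V\}$ becomes common knowledge. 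Because all nodes then hold identical input and the greedy rule breaks ties deterministically by minimum node ID, every node runs the identical while-loop internally and obtains the same set $S$. Hence the greedy phase contributes $0$ rounds and the total is $\BO(k)$.

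For correctness and the size bound, I would first note that the while loop terminates only when no set $S_i=S^k(v)$ is disjoint from $S$, so by construction the returned $S$ is a valid $k$-hitting set; termination in finitely many steps follows because each iteration covers at least one previously unhit set. To bound $|S|$, I would invoke Lemma~\ref{greedyvc}: the greedy construction outputs a hitting set of size $\BO(N\log n)$, where $N$ is the size of a smallest $k$-hitting set of $G$. It therefore remains to show $N=\BO\!\left(\tfrac{n\log n}{k}\right)$.

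I would establish this last bound by a standard probabilistic existence argument. Form a random set $T$ by including each node independently with probability $p=\tfrac{c\ln n}{k}$ for a suitable constant $c>1$. For a fixed $v$, since $|S^k(v)|=k$ we have $\Pr[S^k(v)\cap T=\emptyset]=(1-p)^k\le e^{-pk}=n^{-c}$, and a union bound over the $n$ nodes shows that $T$ fails to be a $k$-hitting set with probability at most $n^{1-c}<1$. Combining this with the fact that $\mathbb{E}[|T|]=np=\BO\!\left(\tfrac{n\log n}{k}\right)$ (via a standard concentration or averaging argument) shows that with positive probability $T$ is simultaneously a hitting set and of size $\BO\!\left(\tfrac{n\log n}{k}\right)$, so such a set exists and $N=\BO\!\left(\tfrac{n\log n}{k}\right)$. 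Plugging this into Lemma~\ref{greedyvc} yields $|S|=\BO\!\left(\tfrac{n\log^2 n}{k}\right)=\tilde\BO(n/k)$.

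The step I expect to require the most care is the round-complexity argument, specifically making precise that the \emph{entire} greedy loop is executed locally and consistently by every node. This hinges on (i) the claim that broadcasting the $k$ smallest-weight incident edges suffices for each node to compute $S^k(\cdot)$, which I would take from Nanongkai's Observation~3.12, and (ii) the fact that the broadcast model renders all of this data common knowledge, so that a deterministic local simulation on identical inputs yields a globally consistent $S$ without any further communication. The size analysis is comparatively routine once Lemma~\ref{greedyvc} is available, the only genuine content being the sampling bound on $N$.
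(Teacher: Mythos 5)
Your proof is correct and follows essentially the same route as the paper's: broadcast phases cost $\BO(k)$ rounds while the greedy loop is purely local, and the size bound comes from combining Lemma~\ref{greedyvc} with the fact that a random sample of size $\tilde\BO(n/k)$ is a $k$-hitting set, which upper-bounds the optimum $N$. The only difference is that you prove this sampling fact from scratch (and spell out the common-knowledge argument for why the greedy loop needs no communication), whereas the paper simply cites Nanongkai for the former and asserts the latter.
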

\begin{proof}
\textit{Runtime:} Each node begins by broadcasting its $k$ minimum-weight edges to all other nodes (see Lines \ref{hsl1}-\ref{hsl2} of Algorithm~\ref{alg:HS}), which takes $\BO(k)$ rounds. Lines \ref{hsl11}-\ref{hsl112} consist of $k$ repetitions of broadcasting a single edge, and thus also take $\BO(k)$ rounds. Line \ref{hsl22} and Lines \ref{hsl3}-\ref{hsl4} only consist of local computation, and can be completed without any additional communication (and thus need no round of communication). Thus, the total number of rounds required is $\BO(k)$.

\textit{Size of the $k$-hitting set:} A random subset of the nodes of size $\tilde\BO(n/k)$ is a $k$-hitting set with high probability (see~\cite{nanongkai2014distributed}). Therefore, the minimum number $N$ of nodes in a $k$-hitting set is upper bounded by $\tilde\BO(n/k)$, and we apply Lemma \ref{greedyvc} to conclude that the set $S$ that is computed by Algorithm~\ref{alg:HS} contains $\tilde\BO(n/k)$ nodes, as desired.

\end{proof}

\section{Deterministic (2 + o(1))-Approximation of APSP in Time 
$\tilde\BO(\text{n}^{1/2})$ in the \vcc Model} \label{sec:vccAPSP}

Nanongkai provides a randomized distributed algorithm (\cite{nanongkai2014distributed}, Algorithm 5.2) to $(2 + o(1))$-approximate APSP in the \vcc model that runs in $\tilde\BO(n^{1/2})$ time. At a high level, this algorithm works by
\begin{enumerate}
\item choosing a random \textit{$\sqrt{n}$-hitting set} $R \subseteq V$ of size $\tilde\BO(\sqrt{n})$such that for all nodes in $V$, there is some node in $R$ within $\sqrt{n}$ hops,
\item $(1+o(1))$-approximate (using random delays to avoid congestion) shortest paths from each node in the hitting set $R$ to every node in $V$,
\item using these shortest paths to approximate shortest paths between all pairs of nodes.
\end{enumerate}

We already presented a method to deterministically compute a $\sqrt{n}$-hitting set $R \subseteq V$ in Section \ref{sec:vccHS}. In the second part of his algorithm, Nanongkai uses a randomized procedure as well, which we replace by a deterministic one in this paper. This results in a deterministic $\tilde{\BO}(n^{1/2})$ round algorithm and we state:

\begin{theorem}\label{thm:APSP-vcc}
The deterministic Algorithm \ref{alg:APSP-vcc} (stated below) returns a $(2 + o(1))$-approximation of APSP in time $\tilde\BO(n^{1/2})$.
\end{theorem}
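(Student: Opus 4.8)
The plan is to assemble the theorem from the three pieces that the paper has already set up: the deterministic hitting-set construction of Section~\ref{sec:vccHS}, a deterministic replacement for Nanongkai's randomized bounded-hop multi-source shortest-path step, and the final combining step that turns pairwise-through-a-hitting-set distances into a $(2+o(1))$-approximation. Concretely, I would first invoke Lemma~\ref{lem:HS-runtime} to deterministically compute a $\sqrt{n}$-hitting set $R$ of size $\tilde\BO(\sqrt{n})$ in $\BO(\sqrt{n})$ rounds. I would then argue that with high probability a random set of that size hits the $\sqrt{n}$ nearest nodes of every vertex, so that for every $v\in V$ there is a hitting-set node within $\sqrt{n}$ hops; since our $R$ is an $\BO(\log n)$-approximation to the optimum hitting set and the optimum is $\tilde\BO(\sqrt{n})$, the same hitting property holds deterministically.

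First I would handle the shortest-path computation from the hitting set. The key replacement is to use the deterministic source-detection / $h$-hop multi-source shortest-path scheduling of Lenzen and Peleg~\cite{lenzen2013efficient}, which works in the broadcast version of the \congest model and hence in the \vcc model. Following Nanongkai, I would add his ``shortcut edges'': for every node $v$ and every hitting-set node $r\in S^{\sqrt{n}}(v)\cap R$, insert a weighted shortcut edge of weight $d_w(v,r)$. This does not change any weighted shortest-path distance, but guarantees that every pair of nodes is connected by a shortest path using only $\BO(\sqrt{n})$ hops in the augmented graph. Computing $S^{\sqrt{n}}(v)$ and these shortcut weights is already done locally during the hitting-set algorithm, so it costs no extra rounds. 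I would then run the deterministic multi-source shortest-path procedure from all $|R|=\tilde\BO(\sqrt{n})$ sources for $h=\tilde\BO(\sqrt{n})$ hops; because each source broadcasts and there are $\tilde\BO(\sqrt{n})$ sources and $\tilde\BO(\sqrt{n})$ hops, the scheduling of~\cite{lenzen2013efficient} delivers all these distances in $\tilde\BO(\sqrt{n})$ rounds without the random-delay congestion-avoidance that Nanongkai used. The output is that every node $u$ learns a $(1+o(1))$-approximation $\widetilde d_w(r,u)$ for every $r\in R$.

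Next I would carry out the combining step. Each node $u$ broadcasts the vector of its (approximate) distances to all hitting-set nodes; since $|R|=\tilde\BO(\sqrt{n})$, this is $\tilde\BO(\sqrt{n})$ values per node and takes $\tilde\BO(\sqrt{n})$ rounds in the \vcc model. Now every node knows $\widetilde d_w(r,x)$ for all $r\in R$ and all $x\in V$, and can locally set $\widetilde d_w(u,v)=\min_{r\in R}\bigl(\widetilde d_w(u,r)+\widetilde d_w(r,v)\bigr)$. For the approximation guarantee I would fix a shortest $u$--$v$ path $P$, pick $r$ to be a hitting-set node lying in $S^{\sqrt{n}}$ of an appropriate endpoint so that $r$ is near the path, and use the triangle inequality: $d_w(u,r)+d_w(r,v)\le d_w(u,v)+2\,d_w(r,\cdot)$, where the extra term is bounded by roughly $d_w(u,v)$ because $r$ is among the $\sqrt{n}$ nearest nodes. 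Folding in the $(1+o(1))$ factor from the bounded-hop step yields an overall bound $\widetilde d_w(u,v)\le (2+o(1))\,d_w(u,v)$, while $\widetilde d_w(u,v)\ge d_w(u,v)$ is immediate. Summing the three round costs gives $\tilde\BO(\sqrt{n})$.

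\textbf{The main obstacle} I expect is verifying the deterministic bounded-hop step rigorously: one must check that the source-detection primitive of~\cite{lenzen2013efficient} indeed schedules $\tilde\BO(\sqrt{n})$ broadcast sources to depth $\tilde\BO(\sqrt{n})$ within $\tilde\BO(\sqrt{n})$ \vcc rounds, and that running it on the shortcut-augmented graph faithfully preserves the $(1+o(1))$ approximation that Nanongkai obtained with random delays. The delicate point is the interaction between the hop bound guaranteed by the shortcut edges and the hop bound consumed by the source-detection schedule; I would need to confirm these match so that no distance is truncated. The approximation-factor bookkeeping in the combining step is routine by comparison, since it is essentially Nanongkai's argument with the randomized ingredients swapped out, and the triangle-inequality analysis over the hitting set is standard.
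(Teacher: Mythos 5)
Your overall architecture matches the paper's: a deterministic greedy hitting set (Lemma~\ref{lem:HS-runtime}), the source-detection primitive of Lenzen and Peleg~\cite{lenzen2013efficient} in place of Nanongkai's random delays, and a final broadcast-and-combine through the hitting set. However, there are two genuine gaps.

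First, your shortcut graph is not Nanongkai's, and the difference is fatal. You add shortcut edges only from each node $v$ to the hitting-set nodes in $S^{\sqrt{n}}(v)\cap R$, whereas the $k$-shortcut graph used in the paper (Lines \ref{apsp41}--\ref{apsp42} of Algorithm~\ref{alg:APSP-vcc}, following \cite{nanongkai2014distributed}) has an edge from $u$ to \emph{every} node of $S^k(u)$. The hop-reduction property you assert (``every pair of nodes is connected by a shortest path using only $\BO(\sqrt{n})$ hops'') is proved by repeatedly jumping along a shortest $u$--$v$ path to the farthest path node lying in $S^k(\cdot)$, and path nodes are in general not hitting-set nodes; with your restricted shortcuts the property simply fails. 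Concretely, take the $n$-node unit-weight path graph with $k=\sqrt{n}$: a minimum-size hitting set consists of nodes spaced about $\sqrt{n}$ apart, so no hitting node lies in $S^{\sqrt{n}}$ of another, shortcuts cannot be chained, and between two shortcut hops any weight-preserving walk must traverse $\Omega(\sqrt{n})$ ordinary edges. Hence a walk of $h=4\sqrt{n}$ hops covers only $\BO(\sqrt{n})$ distance, the $h$-hop multi-source computation from $R$ truncates almost all distances, and for the two endpoints of the path no single $r\in R$ yields finite estimates to both, so your $\widetilde{d}_w(u,v)=\min_{r\in R}(\widetilde{d}_w(u,r)+\widetilde{d}_w(r,v))$ is unbounded. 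The same conflation undermines your approximation bookkeeping: from $r\in S^{\sqrt{n}}(u)\cap R$ you only get $d_w(u,r)\leq d_w(u,v)$ (when $v\notin S^{\sqrt{n}}(u)$), so your triangle-inequality bound gives $d_w(u,r)+d_w(r,v)\leq 3\,d_w(u,v)$, a $3$-approximation, not $2+o(1)$; Nanongkai's factor-$2$ analysis, which the paper inherits by not changing his algorithm, again relies on the full shortcut edge set.

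Second, you leave unresolved precisely the step that the paper's proof must (and does) supply: how to run the \emph{exact, unweighted} source-detection algorithm of~\cite{lenzen2013efficient} on a \emph{weighted} graph and still obtain a $(1+o(1))$ approximation. You flag this as ``the main obstacle'' but give no mechanism. The paper resolves it via Theorem~\ref{thm:SSSPapprox}: for $\BO(\log W)$ scales $i$ it rounds weights to $w_i'(x,y)=\lceil 2hw'(x,y)/(\epsilon 2^i)\rceil$, simulates each weighted edge as $w_i'(x,y)$ unit-weight edges (the two endpoints delay transmissions, as in~\cite{nanongkai2014distributed}), runs $(R,h,|R|)$-source detection (Lemma~\ref{lem:sources}) for $h+|R|+1$ rounds per scale, and takes the minimum over scales. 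Without this ingredient the deterministic bounded-hop MSSP step---the core of the theorem---is not established. A minor further point: the set returned by Algorithm~\ref{alg:HS} is a hitting set \emph{by construction} (the while-loop runs until every $S^k(v)$ is hit); the random-subset argument is needed only to bound the size of the optimum, so your claim that the hitting property ``holds deterministically'' because $R$ approximates the optimum is backwards, though harmless given Lemma~\ref{lem:HS-runtime}.
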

The remainder of this section is devoted to explaining and analyzing Algorithm \ref{alg:APSP-vcc}, which proves this theorem in the end. While doing so, we also review the whole Algorithm 5.2 of \cite{nanongkaiarxiv}. We do this to be able to point out our modifications exactly and to argue that each step can indeed be done in the \vcc model, while the original implementation of Algorithm 5.2 of \cite{nanongkaiarxiv} is just stated for the \cc model (without distinguishing between \vcc and \ecc models). As shown in Theorem 5.3 of \cite{nanongkaiarxiv}, Algorithm 5.2 of \cite{nanongkaiarxiv} computes a $(2 + o(1))$-approximation of APSP on weighted graphs. Note that we only change the implementation of Algorithm 5.2 of \cite{nanongkaiarxiv} to be deterministic such that we can derive the same approximation ratio (with probability one instead of w.h.p.).  

Given a graph $G$, Nanongkai \cite{nanongkaiarxiv} starts by computing a \textit{$k$-shortcut graph} $G^k$ of $G$ for $k = \sqrt{n}$.
\begin{definition}[$k$-shortcut graph]
The shortcut graph $G^k = (V, E^k)$ is obtained by adding an edge $(u,v)$ of weight $d_{w}(u, v)$ to $E^k$ for every $u \in V$ and $v \in S^k(u)$.
\end{definition}
To construct this graph (Lines \ref{APSP-vcc-setk}--\ref{apsp3}), each node begins by broadcasting the $k$ lightest edges adjacent to it. If there are less than $k$ edges adjacent to a node, that node just broadcasts all of them and their weights. Based on this information each node $u\in V$ can compute $S^k(u)$, since running e.g.
$k$ rounds of Dijkstra's algorithm will only need the $k$-lightest edges incident to each node (as argued in \cite{nanongkaiarxiv}). 
During the next $O(k)$ time steps, each node $u$ simultaneously broadcasts its $S^k(u)$ and creates a simulated shortcut edge from every node $u \in G$ to every node $v \in S^k(u)$. New edge weights $w'(u, v) := \min\{w(u, v), \min_{z \in S^k(u)} d_w(u, z) + d_w(z, v)\}$ are assigned to this graph (Lines \ref{apsp41}--\ref{apsp42}). 
Then, in Line \ref{apsp43}, node $u$ locally computes a $k$-hitting set $R$ of $G$, as described in Section \ref{sec:vccHS}, Algorithm \ref{alg:HS}.

\begin{algorithm}[ht!] 
\caption{Deterministic $\tilde\BO(n^{1/2})$-time $(2 + o(1))$-approximation algorithm for APSP in the \vcc model}\label{alg:APSP-vcc}
\begin{algorithmic}[1]

\Procedure{APSP-\vcc}{$G, w$}
\Comment{as executed by each node $u \in G$}

\State $k \gets n^{1/2}$ \label{APSP-vcc-setk}

\For{i in \{1, \dots, k\}} \label{apsp1}
	\State Broadcast the $i$-th lowest weight adjacent edges to all nodes. 
\EndFor \label{apsp2}

\State Compute and broadcast $S^k(u)$ and $\{d_w(u, z)\}_{z \in S^k(u)}$. \label{apsp3}

\ForAll{nodes $v \in V$} \label{apsp41}
	\State $w'(u, v) \gets \min\{w(u, v), \min_{z \in S^k(u)} d_w(u, z) + d_w(z, v)\}$
\EndFor \label{apsp42}

\State $R \gets \textsc{HittingSet}(G, k)$\label{apsp43}

\State $\epsilon \gets \frac{1}{\log n}$ \label{mssp-ecc-line1}
\State $h \gets 4n^{1/2}$
\State $\epsilon\gets 1/\log n$ \label{mssp:eps}
\State $W \gets \max_{e \in E}w'(e)$

\ForAll{$i \in [0, \log W]$}
	\State $D_i' \gets 2^i$
	\State $w_i'(x, y) = \left\lceil \frac{2hw'(x, y)}{\epsilon D_i'} \right\rceil$
\EndFor \label{mssp-ecc-line2}

\ForAll{$i \in [0, \log W]$} \label{mssp-ecc-line3}
	\State Transform each weighted edge $(x, y)$ into $w_i'(x, y)\leq W$ unweighted edges.
	\State \parbox[t]{\dimexpr\linewidth-\algorithmicindent}{Run $(R,h,|R|)$-source detection algorithm (Lemma \ref{lem:sources}) using the obtained unweighted graph for $h+|R|+1$ time steps.}
	\State $d_i'(R, u) \gets $ the distance returned to node $u$, or $\infty$ if no distance was returned.
\EndFor \label{mssp-ecc-line4}

\ForAll{$s_i \in R$}\label{line:3}
	\State $\widetilde{d}^h_{w}(s_i, u) \gets \min_{j \in [0, \log W]}\ d_j'(s_i, u)$
\EndFor\label{line:4}

\ForAll{nodes $v \in R$}\label{line:1}
	\State Broadcast $(v, d'(u, v))$.
\EndFor\label{line:2}

\State $d''(u, v) = \min_{r \in R} d'(u, r) + d'(r, v)$
\EndProcedure

\end{algorithmic}
\end{algorithm}

To further describe the algorithm we need the following definitions.
\begin{definition}[$h$-hop SSSP (\cite{nanongkai2014distributed}, Definition 3.1)]
Consider a network $(G, w)$ and a given integer $h$. For any nodes $u$ and $v$, let $P^h(u, v)$ be the set of all $(u, v)$-paths containing at most $h$ edges. Define the $h$-hop distance between $u$ and $v$ as
\begin{equation*}
    d_{w}^h(u, v) = \begin{cases}
               min_{P \in P^h(u, v)} w(P) 	& :\ P^h(u, v) \neq \emptyset \\
               \infty 				&: \ otherwise.
           \end{cases}
\end{equation*}
Let $h$-hop
SSSP be the problem where, for a given weighted network $(G,w)$, source node
$s$ (node $s$ knows that it is the source), and integer $h$ (known to every node), we want every node $u$
to know $dist^h_{G,w}(s,u)$.

\end{definition}
\begin{definition}[MSSP, $h$-hop MSSP \cite{nanongkaiarxiv} (a.k.a. ($h$-hop) $S$-SP \cite{dissler-thesis,holzer2012optimal})]
Given a set $S\subseteq V$, the multi-source shortest paths problem (MSSP) (a.k.a. $S$-shortest paths problem ($S$-SP)) is to compute SSSP from each node in $S$. In the $h$-hop MSSP problem (a.k.a. $h$-hop $S$-SP) one is interested in the $h$-hop versions of SSSP w.r.t source nodes $S$.
\end{definition}

Nanongkai states an MSSP algorithm that works in the \congest model, and computes $(1 + o(1))$-approximate distances on weighted graphs. The main idea of this algorithm is based on the following theorem.

\begin{theorem}[\cite{nanongkai2014distributed}, Theorem 3.3] \label{thm:SSSPapprox}
Consider any $n$-node weighted graph $(G, w)$ and integer $h$. Let $\epsilon = 1 / \log n$, and let $W$ be the maximum-weight edge in $G$. For any $i$ and edge $(x, y)$, let $D_i' = 2^i$ and $w_i'(x, y) = \left\lceil \frac{2hw(x, y)}{\epsilon D_i'} \right\rceil$. For any nodes $u$ and $v$, if we let
\[
\widetilde{d}_{w}^h(u, v) = min\ \bigg\{ \frac{\epsilon D_i'}{2h} \times d_{w_i'}(u, v)\ |\ i : d_{w_i'}(u, v) \leq (1 + 2/\epsilon)h \bigg\},
\]
then $d_{w}^h(u, v) \leq \widetilde{d}_{w}^h(u, v) \leq (1 + \epsilon)\cdot d_{w}^h(u, v)$.
\end{theorem}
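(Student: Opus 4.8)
The plan is to prove the two inequalities $d_{w}^h(u,v) \le \widetilde{d}_{w}^h(u,v)$ and $\widetilde{d}_{w}^h(u,v) \le (1+\epsilon)\,d_{w}^h(u,v)$ separately, after first recording one elementary rounding estimate that both directions reuse. Throughout I read $d_{w_i'}(u,v)$ as the $h$-hop distance $d_{w_i'}^h(u,v)$ in the integer-weighted graph $(G,w_i')$; this reading is essential, because the lower bound needs the optimal scaled path to use at most $h$ edges. Everything flows from the trivial sandwich $z \le \lceil z\rceil < z+1$ applied to $z = \frac{2hw(x,y)}{\epsilon D_i'}$, which gives, for every edge $(x,y)$ and every scale $i$,
\[
\frac{2h}{\epsilon D_i'}\,w(x,y)\ \le\ w_i'(x,y)\ <\ \frac{2h}{\epsilon D_i'}\,w(x,y)+1 .
\]
Summing over the at most $h$ edges of any path $P$ with $|P|\le h$ yields the path-level estimate $\frac{2h}{\epsilon D_i'}\,w(P)\le w_i'(P) < \frac{2h}{\epsilon D_i'}\,w(P)+h$, which is the only quantitative input I will need.

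For the lower bound I would fix any scale $i$ admitted by the constraint $d_{w_i'}^h(u,v)\le(1+2/\epsilon)h$ and let $Q$ be an $h$-hop path realizing $d_{w_i'}^h(u,v)$, so $|Q|\le h$. The left half of the path estimate gives $w_i'(Q)\ge \frac{2h}{\epsilon D_i'}w(Q)$, hence $\frac{\epsilon D_i'}{2h}\,d_{w_i'}^h(u,v)\ge w(Q)\ge d_{w}^h(u,v)$, the last step because $Q$ is itself an $h$-hop $(u,v)$-path. Since this holds for every admitted $i$, it holds for the minimizer, which is exactly $\widetilde{d}_{w}^h(u,v)$. Note that the admissibility constraint is not needed here: every scale yields a valid lower bound. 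Its role is entirely in the upper bound (and, at the algorithmic level, in keeping the scaled distances computable within the bounded source-detection budget).

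For the upper bound I would exhibit a single good scale. Write $L=d_{w}^h(u,v)$ and let $P^*$ be an optimal $h$-hop $(u,v)$-path, $|P^*|\le h$, $w(P^*)=L$. I would choose $i^*=\lceil \log_2 L\rceil$, so that $L\le D_{i^*}'<2L$. Feeding $P^*$ into the right half of the path estimate and using $L\le D_{i^*}'$ gives $w_{i^*}'(P^*) < \frac{2h}{\epsilon D_{i^*}'}L+h \le \frac{2h}{\epsilon}+h = (1+2/\epsilon)h$, so $d_{w_{i^*}'}^h(u,v)\le w_{i^*}'(P^*)$ satisfies the admissibility constraint and $i^*$ lies in the range of the minimum. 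Scaling back and using $D_{i^*}'<2L$,
\[
\frac{\epsilon D_{i^*}'}{2h}\,d_{w_{i^*}'}^h(u,v)\ \le\ \frac{\epsilon D_{i^*}'}{2h}\,w_{i^*}'(P^*)\ <\ L+\frac{\epsilon D_{i^*}'}{2}\ <\ L+\epsilon L\ =\ (1+\epsilon)L ,
\]
so $\widetilde{d}_{w}^h(u,v)$, being a minimum that includes this term, is at most $(1+\epsilon)\,d_{w}^h(u,v)$.

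The main obstacle is the calibration of the constants rather than any single deep step: one must see that rounding each of the $\le h$ edge-weights up contributes a total additive error of at most $h$ in the scaled metric, and that dividing this back through the factor $\frac{2h}{\epsilon D_{i^*}'}$ converts it into the additive error $\frac{\epsilon D_{i^*}'}{2}$, which collapses to a clean $(1+\epsilon)$ multiplicative factor precisely because the chosen scale satisfies $D_{i^*}'<2L$; the threshold $(1+2/\epsilon)h$ in the definition of $\widetilde{d}_{w}^h$ is exactly what guarantees that this factor-$2$ bracketing scale is admitted. I would also flag two minor points to verify: that a bracketing power of two $i^*=\lceil\log_2 L\rceil$ exists for the distance in question (with the algorithm's finite range $i\in[0,\log W]$ covering the distances that actually arise), and the boundary case $d_{w}^h(u,v)=\infty$, where $(G,w_i')$ has no $h$-hop $(u,v)$-path for any $i$, the admitted set is empty, and both sides equal $\infty$.
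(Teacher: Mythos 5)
Your proof is correct, but note that this paper never proves Theorem~\ref{thm:SSSPapprox} at all: it is imported verbatim from Nanongkai~\cite{nanongkai2014distributed} (his Theorem~3.3) and used as a black box, so the only meaningful comparison is with the original source. Against that source your argument is essentially the standard one: the per-edge sandwich $z \le \lceil z \rceil < z+1$, the observation that a path with at most $h$ hops accrues additive rounding error less than $h$ in the scaled metric, and the bracketing scale $L \le D_{i^*}' < 2L$, whose admissibility under the threshold $(1+2/\epsilon)h$ is exactly what converts the additive error $\epsilon D_{i^*}'/2$ into a clean $(1+\epsilon)$ factor. Your calibration of all constants checks out, including the strict inequality $2^{\lceil \log_2 L \rceil} < 2L$.

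One point you handled well deserves emphasis: your decision to read $d_{w_i'}(u,v)$ as the $h$-hop distance $d^h_{w_i'}(u,v)$ is not a cosmetic choice but is forced by the statement as transcribed here. Under the hop-unbounded reading, an admitted scale can be realized by a path of up to $(1+2/\epsilon)h$ hops, and the lower-bound direction then only yields $d_w(u,v) \le \widetilde{d}^h_w(u,v)$, which is strictly weaker than the claimed $d^h_w(u,v) \le \widetilde{d}^h_w(u,v)$; a two-hop cheap path competing with an expensive direct edge at $h=1$ gives a concrete counterexample. (Nanongkai's own statement sidesteps this by putting the hop-unbounded distance on the left-hand side; in the application the two coincide because the shortcut graph guarantees that $h$-hop distances equal true distances.) Your two flagged boundary issues are also the right ones: the case $d^h_w(u,v) = \infty$ is benign as you say, and the finite range $i \in [0, \log W]$ is a property of Algorithm~\ref{alg:APSP-vcc} rather than of the theorem --- though it is worth noting that since this paper permits fractional weights $a/p$, a distance $L < 1$ would force $i^* = \lceil \log_2 L \rceil < 0$, so the algorithmic range genuinely needs the "distances that actually arise" caveat you raised.
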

This theorem states that we can  compute an $(1 + \varepsilon)$-approximation of $h$-hop-bounded SSSP when we run $\BO(\log n)$ many $h$-hop-bounded SSSP computations rooted in node $u$, each with modified weights $w_1',\dots.w_{\log n}'$. To obtain an $(1 + \varepsilon)$-approximation for $h$-hop-bounded MSSP for sources $S$, Nanongkai performs $\BO(\log n)$ many $h$-hop-bounded MSSP computations rooted in $S$, each with modified weights $w_1',\dots.w_{\log n}'$. In each execution of a $h$-hop MSSP, Nanongkai starts all $h$-hop SSSP computations in all nodes of $S$ simultaneously and delays each step of any $h$-hop SSSP algorithm by a random amount. This is shown to guarantee that with high probability the $|S|$ copies of $h$-hop SSSP do not conflict with each other. 

We can adapt Nanongkai's $h$-hop MSSP algorithm to a deterministic setting using the source detection algorithm of \cite{lenzen2013efficient}.

\begin{definition}[$(S,H,K)$-source detection \cite{lenzen2013efficient}]
Given an unweighted graph $G$ and $H,K\in \mathbb{N}_0$, the $(S,H,K)$-source detection problem is to output for each node $u\in V$ the set $L_u(H,K)$ of all (up to) $K$ closest sources in $S$ to $u$, which are at most $H$ hops away.
\end{definition}
\begin{lemma}[Theorem 4.4, \cite{lenzen2013efficient}]\label{lem:sources}
The $(S,H,K)$-source detection problem can be solved in the \congest model in $\min (H,D) + \min(K,|S|)$ rounds.
\end{lemma}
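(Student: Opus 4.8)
The plan is to prove Lemma~\ref{lem:sources} by analyzing a \emph{pipelined multi-source breadth-first search}. Since the graph is unweighted, I would have every node $v$ maintain, for each source it has heard of, the smallest hop-distance reported so far, and in every round broadcast to all neighbors a single pair $(s,d)$ --- namely the next pair, in non-decreasing order of $d$ (ties broken by the ID of $s$), that it has not yet sent. A node receiving $(s,d)$ records the candidate distance $d+1$, keeping the minimum over all neighbors, and discards any source that is more than $H$ hops away or not among its $K$ closest. Each message is one ID together with a hop-count $\le H \le n$, hence $\BO(\log n)$ bits, so one pair per round fits the \congest bandwidth. After $\min(H,D)+\min(K,|S|)$ rounds each node outputs its recorded list $L_u(H,K)$.

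The heart of the argument is a pipelining invariant. For a node $v$ and source $s$ write $\delta_v(s)=\dist(v,s)$ and let $\mathrm{rank}_v(s)$ be the position of $s$ when the sources are ordered lexicographically by $(\dist(v,\cdot),\text{ID})$. I would prove by induction on the quantity $\delta_v(s)+\mathrm{rank}_v(s)$ that $v$ knows the \emph{correct} distance $\delta_v(s)$ by the start of round $\delta_v(s)+\mathrm{rank}_v(s)$ and broadcasts $(s,\delta_v(s))$ in exactly that round. Two facts drive the induction. First, the sending schedule is consistent: the distance to the $m$-th closest source is non-decreasing in $m$, so the target sending rounds $\delta^{(m)}+m$ are strictly increasing, and sending one pair per round never creates a backlog. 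Second, the receipt step needs a source-monotonicity claim across an edge, stated next.

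The key sub-claim is: if $\delta_v(s)=\delta\ge 1$, then $v$ has a neighbor $u$ with $\dist(u,s)=\delta-1$ and $\mathrm{rank}_u(s)\le\mathrm{rank}_v(s)$. This follows from ball containment: for every source $s'$ and every neighbor $u$ of $v$ we have $\dist(u,s')\ge\dist(v,s')-1$, so $\dist(u,s')\le\delta-2$ forces $\dist(v,s')\le\delta-1$; hence the set of sources ranked before $s$ at $u$ is contained in the set of those ranked before $s$ at $v$, giving $\mathrm{rank}_u(s)\le\mathrm{rank}_v(s)$ for every such $u$, in particular one on a shortest $v$--$s$ path. Combined with the inductive hypothesis (that this $u$ broadcasts $(s,\delta-1)$ at round $(\delta-1)+\mathrm{rank}_u(s)\le(\delta-1)+\mathrm{rank}_v(s)$), node $v$ receives the pair and learns $\delta_v(s)$ by the start of round $\delta_v(s)+\mathrm{rank}_v(s)$, closing the induction. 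Correctness comes for free: by the same induction every node only ever broadcasts true distances, so $v$ records $\min_{u\in N(v)}\dist(u,s)+1=\dist(v,s)$.

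Finally I would read off the runtime. Every pair a node must output satisfies $\dist(v,s)\le\min(H,D)$ (no source is more than $H$ hops away, and no distance exceeds the diameter $D$) and $\mathrm{rank}_v(s)\le\min(K,|S|)$ (only the $K$ closest are needed, and there are at most $|S|$ sources), so the invariant guarantees every required pair is delivered within $\min(H,D)+\min(K,|S|)$ rounds, as claimed. I expect the main obstacle to be making the joint timing-and-correctness induction airtight while handling tie-breaking by ID consistently across edges: the ball-containment step must be stated for the lexicographic $(\dist,\text{ID})$ order rather than raw distances, so that ranks at $u$ and $v$ are compared under the same tie-breaking rule, and the base cases (each source knowing itself at distance $0$) must seed the induction correctly.
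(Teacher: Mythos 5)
The paper does not prove this lemma itself --- it imports it verbatim from Lenzen and Peleg (Theorem 4.4 of \cite{lenzen2013efficient}) --- and your proposal is essentially a faithful reconstruction of that source's argument: their algorithm broadcasts in each round the lexicographically smallest unsent $(d,s)$ pair, and their analysis rests on exactly your pipelining invariant that the rank-$m$ pair is handled by round $d+m$, established via the same rank-monotonicity across shortest-path edges under the shared $(\dist,\mathrm{ID})$ tie-breaking (which is also what makes the truncation to the $K$ closest sources safe). One phrasing to tighten: a node \emph{can} broadcast a stale overestimate in an idle round before the true pair for that source arrives, so correctness follows not because ``only true distances are sent'' but because every candidate is the length of an actual path (hence an upper bound) and the true pair provably arrives on schedule, so the recorded minimum is exact.
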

In Algorithm 1 of \cite{lenzen2013efficient} that corresponds to Lemma \ref{lem:sources}, each node always broadcasts the same message within each time step to all neighbors. Therefore it runs in the broadcast version of the \congest model. Furthermore, it implicitly computes (bounds on) distances that it uses to figure out which nodes are the $K$ closest ones. In the end, these bounds correspond to the exact distances for the $K$ closest nodes. 

Now we proceed by adapting this algorithm that is stated for unweighted graphs to weighted graphs by replacing every edge $e$ of weight $w(e)$ by a path of $w(e)$ edges, each of weight $1$. The simulation of these new nodes and edges is handled by the two nodes adjacent to $e$, and is equivalent to delaying any transmission through $e$ by $w(e)$ rounds as it is done in~\cite{nanongkai2014distributed}. This transforms a weighted graph into an unweighted one.

We now use the above deterministic procedure instead of Nanongkai's randomized one to approximate weighted $h$-hop MSSP on the hitting set. That is we choose $S:=R$. In each execution of the unweighted $h$-hop MSSP on $R$, during iteration $i$, set the weight $w_i'(x, y)$ to be $\left\lceil \frac{2hw'(x, y)}{\epsilon 2^i} \right\rceil$. Now execute Lenzen and Peleg's $(S,H,K)$-source detection algorithm (Lemma \ref{lem:sources}) on graph $G^k$ using weight $w_i'$ with $R:=S$ and $H:=h$. Furthermore we set $K:=|R|$ to guarantee that all sources within $h$ hops are detected. Here we use the fact that in our model nodes at any distance in the graph $G$ can directly communicate with each other. Therefore the runtime of the algorithm stated for the \congest model applies to $G^k$ as well (and not only to $G$) in the \vcc model.

After all $\BO(\log n)$ executions have completed, each node $u \in V$ knows its distance to every node in $R$ under every set of weights $w_i$. 
By Theorem \ref{thm:SSSPapprox}, this allows us to compute a $(1 + o(1))$-approximation of $d_{w}^h(s, u)$ on $G^k$ (Lines \ref{line:3}--\ref{line:4}) when choosing $\varepsilon = 1 / \log n$ (in Line \ref{mssp:eps}), which according to \cite{nanongkaiarxiv} is equal to $d_{w}(s, u)$ for each $s\in R$ and $u\in V$. This is proven by in \cite{nanongkaiarxiv} via the choice of $h$ and $k$, which we do not change. Finally we broadcast these weights in Lines \ref{line:1}--\ref{line:2} and compute like in \cite{nanongkaiarxiv} the value $d''(u,v)$, which Nanongkai bounds to be a $(2+o(1))$-approximation.

\begin{proof}[of Theorem \ref{thm:APSP-vcc}]
\textit{Runtime}: Broadcasting the $k$ lowest-weight edges, one by one in each round, takes $k$ rounds in the \vcc model. Computing $S^k(u)$and $w'$ takes no additional communication. By Lemma \ref{greedyvc} we can compute the $k$-hitting set $R$ is computed in time $\BO(k)$ in the \vcc model.  
Computing weights $w_i'$ in Lines \ref{mssp-ecc-line1}-\ref{mssp-ecc-line2} takes $\BO(\log W)$ rounds. Lines \ref{mssp-ecc-line3}-\ref{mssp-ecc-line4} take $\BO(\log W)$ iterations, each of $\BO(h + |R|)$ time, as each execution of $(R,h,|R|)$-source detection takes $h+|R|$ time steps on the (simulated) undirected graph, see Lemma \ref{lem:sources}. Since $h = \BO(n^{1/2})$ and $|R|=\tilde\BO(n/k)=\tilde\BO(\sqrt{n})$ (see Lemma \ref{greedyvc}) and $\log W=\BO(\log n)$, as $W\in$ poly $n$, Lines \ref{mssp-ecc-line3}-\ref{mssp-ecc-line4} take $\tilde\BO(n^{1/2})$ time overall. The remaining lines of Algorithm \ref{alg:APSP-vcc} only perform broadcasts in Lines \ref{line:1}--\ref{line:2}, which takes $|R|=\tilde\BO(\sqrt{n})$ rounds. Therefore the total runtime is $\tilde\BO(\sqrt{n})$. 

The $(2 + o(1))$-approximation ratio for Algorithm \ref{alg:APSP-vcc} is immediately derived from \cite{nanongkai2014distributed}, as we do not change Nanongkai's algorithm besides executing it deterministically. 
\end{proof}

\section{Open Problems} \label{sec:open}

It is natural to ask whether our method of proving lower bounds for the diameter in the \vcc model can be extended to other problems. Of particular interest are those discussed in \cite{frischknecht2012networks}, since these problems use similar graph constructions for proving lower bounds. It would also be of interest to further reduce the runtime of approximating APSP in the \vcc and \ecc model, maybe also at the cost of larger approximation factors.

\addcontentsline{toc}{section}{References} 
\bibliographystyle{plain}
\bibliography{references}

\begin{thebibliography}{10}

\bibitem{BabaiFS86}
L{\'a}szl{\'o} Babai, Peter Frankl, and Janos Simon.
\newblock {Complexity classes in communication complexity theory (preliminary
  version)}.
\newblock In {\em Proceedings of the 27th annual IEEE Symposium on Foundations
  of Computer Science, FOCS 1986, Toronto, Ontario, Canada, 27-29 October
  1986}, pages 337--347, 1986.

\bibitem{Bar-YossefJKS04}
Ziv Bar-Yossef, T.~S. Jayram, Ravi Kumar, and D.~Sivakumar.
\newblock {An information statistics approach to data stream and communication
  complexity}.
\newblock {\em Journal of Computer and System Science}, 68(4):702--732, 2004.

\bibitem{computing2014censor}
Keren Censor-Hillel and Ami Paz.
\newblock Computing exact distances in the congested clique.
\newblock {\em arXiv preprint 1412.2667}, 2014.

\bibitem{sarma2012distributed}
Atish Das~Sarma, Stephan Holzer, Liah Kor, Amos Korman, Danupon Nanongkai,
  Gopal Pandurangan, David Peleg, and Roger Wattenhofer.
\newblock Distributed verification and hardness of distributed approximation.
\newblock {\em SIAM Journal on Computing}, 41(5):1235--1265, 2012.

\bibitem{dean2008mapreduce}
Jeffrey Dean and Sanjay Ghemawat.
\newblock Mapreduce: simplified data processing on large clusters.
\newblock {\em Communications of the ACM (CACM)}, 51(1):107--113, 2008.

\bibitem{dissler-thesis}
Benjamin Dissler.
\newblock Efficient multi-aggregation with applications to centrality
  computation.
\newblock Semester thesis, ETH Z\"urich, Department of Information Technology
  and Electrical Engineering, Z\"urich, Switzerland, 2013.

\bibitem{kuhn7}
Shahar Dobzinski, Noam Nisan, and Sigal Oren.
\newblock Economic efficiency requires interaction.
\newblock In {\em Symposium on Theory of Computing, {STOC} 2014, New York, NY,
  USA, May 31 - June 03, 2014}, pages 233--242, 2014.

\bibitem{kuhn2014}
Andrew Drucker, Fabian Kuhn, and Rotem Oshman.
\newblock On the power of the congested clique model.
\newblock In {\em Proceedings of the 33rd annual ACM SIGACT-SIGOPS Symposium on
  Principles of Distributed Computing, PODC 2014, Paris, France, July 15-18,
  2014}, pages 367--376, 2014.

\bibitem{uiucnotes}
Jeff Erickson.
\newblock Cs 473g, approximation algorithms, lecture 7, university of illinois
  at urbana-champaign, champaign, il,
  http://www.cs.uiuc.edu/class/fa05/cs473g/lectures, 2005.

\bibitem{frischknecht2012networks}
S.~Frischknecht, S.~Holzer, and R.~Wattenhofer.
\newblock Networks cannot compute their diameter in sublinear time.
\newblock In Yuval Rabani, editor, {\em Proceedings of the 23rd annual ACM-SIAM
  Symposium on Discrete Algorithms, SODA 2012, Kyoto, Japan, January 17-19,
  2012}, pages 1150--1162, 2012.

\bibitem{kuhn14}
O.~Goldreich and A.~Warning.
\newblock Secure multi-party computation. unpublished manuscript, 1998.

\bibitem{DBLP:conf/sirocco/HegemanP14}
James~W. Hegeman and Sriram~V. Pemmaraju.
\newblock Lessons from the congested clique applied to mapreduce.
\newblock In Magn{\'{u}}s~M. Halld{\'{o}}rsson, editor, {\em Structural
  Information and Communication Complexity - 21st International Colloquium,
  {SIROCCO} 2014, Takayama, Japan, July 23-25, 2014. Proceedings}, volume 8576
  of {\em Lecture Notes in Computer Science}, pages 149--164. Springer, 2014.

\bibitem{DBLP:journals/corr/HegemanPS14}
James~W. Hegeman, Sriram~V. Pemmaraju, and Vivek Sardeshmukh.
\newblock Near-constant-time distributed algorithms on a congested clique.
\newblock In {\em DISC}, pages 514--530, 2014.

\bibitem{holzer2012optimal}
S.~Holzer and R.~Wattenhofer.
\newblock Optimal distributed all pairs shortest paths and applications.
\newblock In Darek Kowalski and Alessandro Panconesi, editors, {\em Proceedings
  of the 31st annual ACM SIGACT-SIGOPS Symposium on Principles of Distributed
  Computing, PODC 2012, Funchal, Madeira, Portugal, July 16-18, 2012}, pages
  355--364, 2012.

\bibitem{holzer2013phd}
Stephan Holzer.
\newblock {\em Distance Computation, Information Dissemination, and Wireless
  Capacity in Networks, Diss. ETH No. 21444}.
\newblock Phd thesis, ETH Zurich, Zurich, Switzerland, 2013.

\bibitem{KalyanasundaramS92}
Bala Kalyanasundaram and Georg Schnitger.
\newblock {The Probabilistic Communication Complexity of Set Intersection}.
\newblock {\em SIAM Journal of Discrete Mathematics}, 5(4):545--557, 1992.

\bibitem{kaski2014algebraisation}
Petteri Kaski, Janne~H. Korhonen, Christoph Lenzen, and Jukka Suomela.
\newblock Algebrisation in distributed graph algorithms: Fast matrix
  multiplication in the congested clique.
\newblock {\em arXiv preprint 1412.2109}, 2014.

\bibitem{klauck2013distributed}
Hartmut Klauck, Danupon Nanongkai, Gopal Pandurangan, and Peter Robinson.
\newblock The distributed complexity of large-scale graph processing.
\newblock {\em arXiv preprint arXiv:1311.6209 (to appear at SODA'15)}, 2013.

\bibitem{kushilevitz97}
E.~Kushilevitz and N.~Nisan.
\newblock {\em Communication complexity}.
\newblock Cambridge University Press, Cambridge, UK, 1997.

\bibitem{lenzen2013efficient}
C.~Lenzen and D.~Peleg.
\newblock Efficient distributed source detection with limited bandwidth.
\newblock In Panagiota Fatourou and Gadi Taubenfeld, editors, {\em Proceedings
  of the 32nd annual ACM SIGACT-SIGOPS Symposium on Principles of Distributed
  Computing, PODC 2013, Montreal, Quebec, Canada, July 22-24, 2013}, pages
  375--382, 2013.

\bibitem{lenzen2012optimal}
Christoph Lenzen.
\newblock Optimal deterministic routing and sorting on the congested clique.
\newblock In Panagiota Fatourou and Gadi Taubenfeld, editors, {\em Proceedings
  of the 32nd annual ACM SIGACT-SIGOPS Symposium on Principles of Distributed
  Computing, PODC 2013, Montreal, Quebec, Canada, July 22-24, 2013}, pages
  42--50, 2013.

\bibitem{lotker2003mst}
Zvi Lotker, Elan Pavlov, Boaz Patt-Shamir, and David Peleg.
\newblock Mst construction in o(log log n) communication rounds.
\newblock In {\em Proceedings of the 15th annual ACM Symposium on Parallel
  Algorithms and Architectures, SPAA 2003, San Diego, California, USA, June
  7-9,2003}, pages 94--100, 2003.

\bibitem{malewicz2010pregel}
Grzegorz Malewicz, Matthew~H Austern, Aart~JC Bik, James~C Dehnert, Ilan Horn,
  Naty Leiser, and Grzegorz Czajkowski.
\newblock Pregel: a system for large-scale graph processing.
\newblock In {\em Proceedings of the {ACM} {SIGMOD} International Conference on
  Management of Data, {SIGMOD} 2010, Indianapolis, Indiana, USA, June 6-10,
  2010}.

\bibitem{kuhn1}
Y.~Matias N.~Alon and M.~Szegedy.
\newblock The space complexity of approximating the frequency moments.\text{ J.
  Comput. and Syst. Sciences, 58(1):137-147, 1999}.

\bibitem{nanongkai2014distributed}
Danupon Nanongkai.
\newblock Distributed approximation algorithms for weighted shortest paths.
\newblock In {\em Proceedings of the 46th Annual ACM Symposium on Theory of
  Computing}, STOC '14, pages 565--573, 2014.

\bibitem{nanongkaiarxiv}
Danupon Nanongkai.
\newblock Distributed approximation algorithms for weighted shortest paths.
\newblock {\em CoRR}, abs/1403.5171, 2014.

\bibitem{nightingale2012flat}
Edmund~B Nightingale, Jeremy Elson, Jinliang Fan, Owen~S Hofmann, Jon Howell,
  and Yutaka Suzue.
\newblock Flat datacenter storage.
\newblock In {\em OSDI}, pages 1--15, 2012.

\bibitem{boaz2011sorting}
Boaz Patt-Shamir and Marat Teplitsky.
\newblock The round complexity of distributed sorting: Extended abstract.
\newblock In {\em Proceedings of the 30th Annual ACM SIGACT-SIGOPS Symposium on
  Principles of Distributed Computing}, PODC '11, pages 249--256, New York, NY,
  USA, 2011. ACM.

\bibitem{peleg}
David Peleg.
\newblock {\em Distributed computing: a locality-sensitive approach}.
\newblock Society for Industrial and Applied Mathematics, Philadelphia,
  Pennsylvania, USA, 2000.

\bibitem{pemmaraju2014minimum}
Sriram~V. Pemmaraju and Vivek~B. Sardeshmukh.
\newblock Algebrisation in distributed graph algorithms: Fast matrix
  multiplication in the congested clique.
\newblock {\em arXiv preprint 1412.2333}, 2014.

\bibitem{Razborov92}
Alexander~A. Razborov.
\newblock {On the Distributional Complexity of Disjointness}.
\newblock {\em Theoretical Computer Science}, 106(2):385--390, 1992.

\bibitem{wattenhofer2012lecture}
Roger Wattenhofer.
\newblock Principles of distributed computing, lecuter 11, eth zurich, zurich,
  switzerland,
  http://dcg.ethz.ch/lectures/podc\_allstars/lecture/chapter11.pdf, 2011.

\end{thebibliography}

\appendix
\section{Appendix}

\subsection{Proof of Lemma~\ref{diam2}}\label{app:diam2}

\begin{lem:d2}
The diameter of $G_{a, b}$ is at most $2+1/p$.
\end{lem:d2}
\begin{proof}
We show case by case that for any nodes $u$ and $v$ in $G_{a, b}$ the distance $d_w(u, v)$ is at most $2+1/p$. The cases are as follows:
\begin{enumerate}
\item \textbf{Nodes $u$ and $v$ are both in $G_{a}$:} Every node in $G_a$ other than $C_L$ is connected to $C_L$ by an edge of length $1$, and thus each node in $G_a$ can reach any other node in $G_a$ using at most two edges of length $1$. Thus, $d_w(u, v) \leq d_w(u, c_L) + d_w(c_L, v) \leq 2$.

\item \textbf{Nodes $u$ and $v$ are both in $G_{b}$:} This case is identical to the previous case, so $d_w(u, v) \leq 2$.

\item \textbf{Node $u$ is in $G_a$ and node $v$ is in $G_b$ (or vice verse):} From $u$ it is at most one hop to $C_L$ of length $1$, and from $v$ it is at most one hop to $C_R$ of length 1. Since the edge between $c_L$ and $c_R$ has weight $1/p$, we conclude that $d_w(u, v) \leq d_w(u, c_L) + d_w(c_L, c_R) + d_w(c_R, v) = 2+1/p$.
\end{enumerate}
\end{proof}

\subsection{Proof of Lemma~\ref{2approx2}}\label{app:2approx2}
\begin{lem:2a}
The diameter of $G_{a, b}$ is $1+1/p$ if the sets $a$ and $b$ are disjoint, else it is $2+1/p$.
\end{lem:2a}

\begin{proof}
\textbf{If inputs $a$ and $b$ are not disjoint,} then there exists an $i \in \{1, \dots, k(n)^2\}$ such that $a(i)=b(i)=1$. Let us fix such an $i$ for now and let $\nu:=i\mod k(n)$ and $\mu:=k(n)+\left\lfloor \frac{i}{k(n)}\right\rfloor$. We show that the two nodes $l_\nu$ and $r_\mu$ have distance of at least $2+1/p$. The path must contain an edge of length $1/p$ from the cut-set $C_{k(n)^2}$, since these are the only edges that connect $G_a$ to $G_b$. To obtain a path of length $1+1/p$ we are only allowed to add one more edge from either $G_a$ or $G_b$. When looking at the construction, the only two paths of length $1+1/p$ that we could hope for are $(l_\nu,l_\mu,r_\mu)$ and $(l_\nu,r_\nu,r_\mu)$. However, due to $a(i)=b(i)=1$ and the implied choice of $\nu$ and $\mu$, we know that the construction of $G_{a,b}$ does not include edge $(l_\nu,l_\mu)$ nor edge $(r_\nu,r_\mu)$. Thus none of these paths exists and we conclude that $d_w(l_\nu,r_\mu) \geq 2+1/p$.

\textbf{Conversely if $a$ and $b$ are disjoint,} the diameter of $G_{a,b}$ is at most $1+1/p$. We prove this by showing that for any nodes $u$ and $v$ in $G_{a,b}$ the distance $d_w(u,v)$ is at most $1+1/p$. To do this we distinguish three cases:
\begin{enumerate}
\item \textbf{Node $u$ is in $G_a$ and node $v$ is in $G_b$ (or vice verse):} When considering the nodes $\{c_L,c_R,w_0,w_1,w_2,\dots\}$, we notice that from each of these nodes each other node in the graph can be reached within $2$ hops. Now we can assume without loss of generality that $u=l_\nu\in L$ and $v=r_\mu\in R$ for some $\mu,\nu\in\{1, \dots, 2 k(n)-1\}$. Since we assumed that $a$ and $b$ are disjoint there must be either at least one of the edges $(l_\nu,l_\mu)$ or $(r_\nu,r_\mu)$ in case that one of the nodes is in \textbf{UP} and the other node is in \textbf{LP}.  Thus there is at least one of the paths $(l_\nu,l_\mu,r_\mu)$ or $(l_\nu,r_\nu,r_\mu)$ with $d_w(l_\nu,r_\mu)\leq 1+1/p$. In the remaining case $u,v$ are both in \textbf{UP} or both in \textbf{LP}, and we make use of the clique-edges and conclude that $u$ and $v$ are connected by path $(l_\nu,r_\nu,r_\mu)$ of length $d_w(l_\nu,r_\nu)+d_w(r_\nu,r_\mu)=1+1/p$.

\item \textbf{Nodes $u$ and $v$ are both in $G_a$:} Let $v = a_i$. In the above case, we can get from node $u$ to $b_i$ using a path of length $1+1/p$. Since the edge $(a_i, b_i)$ exists and has weight $1/p$, we can get from $u$ to $v$ using a path of length $1+1/p$.

\item \textbf{Nodes $u$ and $v$ are both in $G_b$:} same as the above case where both $u$ and $v$ are in $G_a$.
\end{enumerate}
Finally note, that these two cases combined with the upper bound from Lemma~\ref{diam2} imply that $d_w(l_\nu,r_\mu)=2+1/p$ if and only if $a$ and $b$ are not disjoint.
\end{proof}

\subsection{Proof of Lemma~\ref{greedyvc}}\label{app:greedyvc}

\begin{lem:gvc}
Given a graph $G$, if the smallest possible $k$-hitting set uses $N$ vertices, then $S$ contains at most $\BO(N \log n)$ vertices.
\end{lem:gvc}
\begin{proof}
We follow the proof of \cite{uiucnotes} that is originally stated for vertex covers and adapt it to $k$-hitting sets. Since the optimal solution $OPT$ uses $N$ nodes, there must exist some vertex that is contained in at least $\lceil n/N\rceil$ sets $S^k(u)$. Our greedy algorithm chooses the vertex $v_{\max}$ contained in as many sets $S^k(u)$ as possible. Thus $v_{\max}$ is contained in at least $\lceil n/N\rceil$ sets. After the first iteration of the algorithm, there are at most $\lfloor n(1 - 1/N) \rfloor$ sets $S^k(u)$ such that $S^k(u) \cap S = \emptyset$. Now observe, that $OPT$ is still a $k$-hitting set for the nodes in the remaining sets $S^k(u)$. By the same argument as above, since there is a $k$-hitting set that uses $N$ sets, there exists a vertex contained in at least $\lfloor n(1 - 1/N)\rfloor/N$ remaining sets $S^k(u)$. By induction, we can show that after $r$ rounds, there are at most $n(1 - 1/N)^r$ sets $S^k(u)$ that are disjoint from $S$. Choosing $r = \lceil N \ln n\rceil$ shows that $S$ is guaranteed to be a valid hitting set after $\BO(N \log n)$ rounds.
\end{proof}

\end{document}